\newcommand{\la}{\langle}
\newcommand{\ra}{\rangle}
\newcommand{\calH}{\mathcal{H}}
\newcommand{\R}{\mathbb{R}}
\newcommand{\mcs}{\mathcal{S}}
\newcommand{\mcp}{\mathcal{P}}
\newcommand{\C}{\mathbb{C}}
\newcommand{\beq}{\begin{equation}}
\newcommand{\enq}{\end{equation}}
\newcommand{\bel}{\begin{lemma}}
\newcommand{\enl}{\end{lemma}}
\newcommand{\bet}{\begin{theorem}}
\newcommand{\ent}{\end{theorem}}
\newcommand*{\addFileDependency}[1]{% argument=file name and extension
  \typeout{(#1)}
  \@addtofilelist{#1}
  \IfFileExists{#1}{}{\typeout{No file #1.}}
}
\newcommand{\Tr}{\mathrm{tr}}
\newcommand{\eps}{\varepsilon}
\newcommand{\suppress}[1]{}
\newcommand{\bra}[1]{\langle #1|}
\newcommand{\ket}[1]{|#1 \rangle}
\newcommand{\braket}[2]{\langle #1|#2\rangle}
\mathchardef\mhyphen="2D
\def\be{\begin{equation}}
\def\ee{\end{equation}}
\newcommand{\gexcl}{\mathcal{G}_{\mathrm{ex}}}
\newcommand*{\rom}[1]{\expandafter\@slowromancap\romannumeral #1@}
\appto{\appendix}{%
 \@ifstar{\def\theequation@prefix{A.}}%
 {}%
}
\mathchardef\mhyphen="2D
\newtheorem{theorem}{Theorem}
\newtheorem{lemma}[theorem]{Lemma}
\newtheorem{definition}[theorem]{Definition}
\newtheorem*{main result}{Main Theorem}
\newtheorem*{proof strategy}{Proof Strategy}
\newtheorem*{cor}{Corollary}
\newtheorem*{theorem*}{Theorem}
\newtcolorbox{mybox}[2][]{%
  attach boxed title to top center
               = {yshift=6pt},
  colback      = black!1!white,
  colframe     = black!15!black,
  fonttitle    = \bfseries,
  colbacktitle = black!15!black,
  title        = #2,#1,
  enhanced,
}
\begin{document}
	
\title{Robust self-testing of quantum systems via non-contextuality inequalities}

%%%%%%%%%%%%%%%%%%%%%%%%%%%%%%%%%%%%%%%%%%%%%%%%%%%%%%%%%%%%%%%%%%%

\begin{abstract}
Characterising 
%Self-testing 
unknown quantum states and measurements is a fundamental problem in quantum information processing. In this Letter, 
we provide a novel scheme to self-test local quantum systems  using  non-contextuality inequalities. Our work  leverages  the graph-theoretic framework for contextuality introduced by Cabello, Severini, and Winter, combined with  tools  from mathematical optimisation that guarantee  the unicity  of optimal solutions. 
As an application, 
%Furthermore,
 we show that the celebrated Klyachko-Can-Binicio\ifmmode \breve{g}\else \u{g}\fi{}lu-Shumovsky inequality  and its generalisation to contextuality scenarios with odd $n$-cycle compatibility relations 
%graphs 
admit robust  self-testing. 
%We show that these results are robust subject to the appropriate compatibility assumptions, demonstrating the use of contextuality for self-testing applications. 
%We arrive at our results using %Our result opens many interesting questions for future research.
\end{abstract}

\author{Kishor Bharti}
\affiliation{Centre for Quantum Technologies, National University of Singapore}
%\email{e0016779@u.nus.edu}

\author{Maharshi Ray}
\affiliation{Centre for Quantum Technologies, National University of Singapore}
%\email{maharshi91@gmail.com}

\author{Antonios Varvitsiotis}
\affiliation{Department of Electrical and Computer Engineering, National University of Singapore}
%\email{avarvits@gmail.com}

\author{Naqueeb Ahmad Warsi}
\affiliation{Centre for Quantum Technologies, National University of Singapore}
%\email{warsi.naqueeb@gmail.com}

\author{Ad\'{a}n Cabello}
\affiliation{Departamento de F\'{i}sica Aplicada II, Universidad de Sevilla, E-41012 Sevilla, Spain}
%\email{adan@us.es}

\author{Leong-Chuan Kwek}
\affiliation{Centre for Quantum Technologies, National University of Singapore} \affiliation{MajuLab, CNRS-UNS-NUS-NTU International Joint Research Unit, Singapore UMI 3654, Singapore}
%\affiliation{Institute of Advanced Studies, Nanyang Technological University, Singapore 639673, Singapore}
\affiliation{National Institute of Education, Nanyang Technological University, Singapore 637616, Singapore}
%\email{cqtklc@gmail.com}

\maketitle

%%%%%%%%%%%%%%%%%%%%%%%%%%%%%%%%%%%%%%%%%%%%%%%%%%%%%%%%%%%%%%%%%%%

%\tableofcontents

%%%%%%%%%%%%%%%%%%%%%%%%%%%%%%%%%%%%%%%%%%%%%%%%%%%%%%%%%%%%%%%%%%%

%\section{Introduction}
\noindent {\em Introduction.---}The deployment and analysis of mathematical models have been a crucial tool to advance our scientific understanding of the physical world. Nevertheless, complex mathematical models often admit a multitude of possible solutions, a phenomenon that can lead to ambiguity and erroneous predictions when the solution of the model is used to study some real-life problem. Models with no uniquely-identifiable solutions manifest themselves across most fields of science and mathematics, typical examples being the nonuniqueness of solutions to partial differential equations and the existence of multiple Nash equilibria in non-coopearative games. More pertinent to this work, 
the uniqueness of the ground state of a Hamiltonian is a problem with important engineering applications. Indeed, quantum annealing crucially relies on the uniqueness of the ground state of the underlying Hamiltonian, which is used to encode the solution of an optimization problem \cite{das2005quantum}.

From a practical standpoint, the noisy nature of the collected data governing the model selection process, suggests we should employ ``robust'' models, i.e., models that have a unique solution that is moreover stable under perturbations of the input data. Notwithstanding the ubiquitousness and importance of problems related to the unicity and robustness of the solutions of a given model, there is no general framework allowing to address these questions in a unified manner.

One of the most extensively used modeling tools in science and engineering is mathematical optimization. In this setting, the model is specified by a family of decision variables that satisfy certain feasibility constraints. The goal is then to find the value of the decision variables that maximizes an appropriate measure of performance. Undoubtedly, one of the most important optimization model is linear programming (LP), where the decision variables are scalar variables subject to affine constraints. An equally important optimization model is semidefinite programming~(SDP), constituting a wide generalization of linear programming with extensive modeling power and efficient algorithms for solving them. Unlike linear programs, the decision variables in a SDP are vectors, and the constraints are defined in terms of the inner products of the vectors. SDPs have many important applications in physics, e.g. in quantum foundations (Bell nonlocality, contextuality, steering) \cite{tura2014detecting, CSW, cavalcanti2016quantum}, quantum information theory (entanglement witnesses, tomography, quantum state discrimination) \cite{doherty2002distinguishing, gonccalves2013quantum, jevzek2002finding}, quantum cryptography \cite{masanes2011secure}, and quantum complexity \cite{reichardt2009span}, just to mention a few. Most importantly, the aspect of SDPs that is crucial to this work is that, like LPs they offer a general framework for studying uniqueness and robustness of model solutions.

In this Letter, we employ the paradigm of identifiable robust models to characterize untrusted devices via contextuality. 
Contextuality refers to the impossibility of reproducing a set of probability distributions, each of them for a context (defined as a set of compatible and mutually nondisturbing observables), that share some marginal probabilities with a joint probability distribution in a single probability space. Quantum theory is an example of a contextual theory \cite{KS67}. In this work we appropriately extend the paradigm of Bell self-testing to the framework of contextuality.
In terms of techniques, our work leverages the well-known link between contextuality and semidefinite programming identified in the seminal work by Cabello, Severini, and Winter \cite{CSW}, combined with some less-known results concerning the unicity and robustness of optimal solutions to semidefinite programs. Roughly speaking, we show that the nearness-of-optimality of the CSW semidefinite program bounds the distance in the SDP-solution space, which in turn translates into a bound on the distance from the ideal quantum realization. We believe that the tools employed in this paper will have value outside of the domain of contextuality, e.g., see \cite{thinh2018structure} for a recent application in Bell nonlocality. Our results render new insights into the foundations of quantum contextuality and a proof-of-principle approach to characterize the underlying quantum states and measurements manifesting quantum contextuality via experimental statistics. 
We provide an innovative scheme to attest robust self-testing for any noncontextuality inequality and present a concrete illustration for the case of 
the generalized KCBS inequality, which is defined for any odd number of measurement events $n\ge 5$. 
Lastly, in terms of applications, our results allows one to verify quantum systems locally under the following three assumptions characteristic of Kochen-Specker contextuality scenarios \cite{KS67,KCBS,Cabello08}. {Assumption 1:} The measurements are {\em ideal} \cite{CY14,CY16} (i.e., they give the same outcome when performed consecutive times, they do not disturb compatible measurements, all their coarse-grainings admit realizations that satisfy these properties), {Assumption 2:} The measured system has no more memory than its information carrying capacity, each measurement device is only used once, and there is an unlimited supply of them. {Assumption 3:} The measurements obey the compatibility relations dictated by the odd cycle graph. 
In the case of Bell self-testing, it is necessary to assume that the involved parties are spacelike separated and that there is no superluminal communication \cite{Bell64, CHSH}, for otherwise, the statistics that attain the quantum supremum of a Bell inequality \cite{toner2003communication} can be simulated using classical resources. In the same spirit, Assumption~2 is necessary in the setting of contextuality, for otherwise, contextuality can be simulated by classical systems \cite{kleinmann2011memory, Cabello2018optimal}. 

%%%%%%%%%%%%%%%%%%%%%%%%%%%%%%%%%%%%%%%%%%%%%%%%%%%%%%%%%%%%%%%%%%%

\medskip 
%\section{Main result}\label{sec:main}
\noindent {\em Self-testing in Bell scenarios.---}To motivate our results, it is instructive to survey the relevant results in the setting of Bell nonlocality, a special case of contextuality where the contexts are generated by the spacelike separation of the involved parties~\cite{amaral2018graph}. 
The experimental tests which reveal the nonlocal nature of a physical theory are called Bell inequalities or Bell tests. 
Geometrically, a Bell inequality corresponds to a halfspace that contains the set of local behaviours,~i.e., 
\begin{equation}\label{csver}
\sum_{a,b,x,y}B^{ab}_{xy}p(ab|xy)\le B_{\ell}, 
\end{equation}
for all local behaviors $p(ab|xy).$ 
The quantum supremum of the Bell inequality \eqref{csver}, denoted by $B_q$, is the largest possible value of the expression $\sum_{a,b,x,y}B^{ab}_{xy}p(ab|xy),$ when $p(ab|xy)$ ranges over the set of quantum bahaviors, i.e., 
$$p(ab|xy)=\bra{\psi} A_{x|a}\otimes B_{y|b}\ket{\psi},$$
for a quantum state $\ket{\psi}\in \calH_A \otimes \calH_B$ and quantum measurements $\{A_{x|a}\}$,~$\{B_{y|b}\}$ acting on $\calH_A$ and $\calH_B$ respectively. 
Besides their physical significance, Bell inequality violations witness the existence of certifiable
randomness, and have been leveraged to power many other important information-theoretic tasks~\cite{ekert1991quantum, colbeck2009quantum,cubitt2010improving}.

The feature of Bell inequalities that is most pertinent to this work is that, the quantum realizations that achieve the quantum supremum of a Bell inequality are sometimes uniquely determined up to local isometries and ancilla degrees of freedom. Formally, a Bell-inequality is a self-test for the realization $(\calH_A, \calH_B, \psi, \{A_{x|a}\}, \{B_{y|b}\})$ if for any other realization $(\calH_{A'}, \calH_{B'}, {\psi}', \{A'_{x|a}\}, \{B'_{y|b}\})$ that also attains the quantum supremum, there exists a local isometry $V=V_A\otimes V_B$ and an ancilla state $\ket{junk}$ such~that 
\be\label{bell:ST}
\begin{aligned}
V\ket{\psi'}&=\ket{junk}\otimes \ket{\psi} ,\\
V(A'_{x|a}\otimes B'_{y|b})\ket{\psi'} & =\ket{junk} (A_{x|a}\otimes B_{y|b})\ket{\psi}.
%B'_{y|b} &=V_B(B_{y|b}\otimes 1_{\calH_{B''}})V_B^\dagger.
\end{aligned}
\ee
In practical terms however, when a Bell experiment is performed in the lab, experimental imperfections will only allow to achieve a value which is close, but not equal, to the ideal quantum supremum $B_q$. These practical considerations naturally lead to the notion of robust self-testing. Specifically, a Bell inequality is an $(\epsilon,r)$-robust self-test for the realization $(\calH_A, \calH_B, \psi, \{A_{x|a}\}, \{B_{y|b}\})$ if it is a self-test as defined above, and furthermore, whenever for some realization $(\calH_{A'}, \calH_{B'}, {\psi}', \{A'_{x|a}\}, \{B'_{y|b}\}),$ 
$$\sum_{a,b,x,y}B^{ab}_{xy}\bra{\psi'} (A'_{x|a}\otimes B'_{y|b})\ket{\psi'}\ge B_q-\epsilon,$$
we have that 
$$\|V(A'_{x|a}\otimes B'_{y|b})\ket{\psi'} -\ket{junk} (A_{x|a}\otimes B_{y|b})\ket{\psi}
\|\le \mathcal{O}(\epsilon^r).$$
As an example, the well-known Clauser-Horne-Shimony-Holt (CHSH) Bell inequality is an $(\epsilon, {1\over 2})$ robust self-test for the singlet state and appropopriate Pauli measurements, e.g. see \cite{popescu1992generic,Yao_self,mckague2012robust,bugliesi2006automata}. The term ``self-testing'' was first introduced by Mayers and Yao \cite{Yao_self} in the setting of Bell-nonlocality \cite{Bell64}. However, the idea underlying self-testing is present in earlier works, for example in the works of Tsirelson \cite{tsirel1987quantum}, Summers-Werner \cite{summers1987bell}, and Popescu-Rohrlich \cite{popescu1992generic}. Recent research on self-testing moves in various new directions, e.g. which states can be self-tested \cite{ vsupic2018self, Koon_Tong} or how to tighten the robustness results, so that self-testing results have practical applications \cite{Jed}.

%%%%%%%%%%%%%%%%%%%%%%%%%%%%%%%%%%%%%%%%%%%%%%%%%%%%%%%%%%%%%%%%%%%

\medskip 
%\section{Main result}\label{sec:main}
\noindent {\em Self-testing in contextuality scenarios.---} In this section we introduce a natural analogue of the notion of (robust) self-testing for contextuality scenarios, where the noncontextuality assumption is not enforced via locality. We follow the exclusivity graph approach to contextuality \cite{CSW}.

A contextuality scenario is defined by a family of measurement events $e_1,\ldots,e_n$. Two events are {\em mutually exclusive} when they can be realized by the same measurement but correspond to different outcomes. To the events $\{e_i\}_{i=1}^n$ we associate their {exclusivity graph}, whose vertex set is $\{1,\ldots,n\}$ (denoted by $[n]$), and two vertices $i,j$ are adjacent (denoted by $i\sim j$) if the measurement events $e_i$ and $e_j$ are exclusive. 

For an exclusivity graph $\gexcl$, we consider theories that assign probabilities to the measurement events corresponding to its vertices. A {behavior} corresponding to $\gexcl$ is a mapping $p: [n]\to [0,1]$, where $p_i+p_j\le 1$, for all $i\sim j$. 
Here, the nonnegative scalar $p_i\in [0,1]$ encodes the probability that measurement event $e_i$ occurs. Furthermore, note that the linear constraint $p_i+p_j\le 1$ enforces that if measurement event $e_i$ takes place (i.e., $p(e_i)=~1)$, the event $e_{i+1}$ cannot take~place.

A behavior $p: [n]\to [0,1]$ is deterministic noncontextual if all events have pre-determined values that do not depend on the occurrence of other events. Concretely, a deterministic noncontextual behavior $p$ is a mapping $p: [n]\to \{0,1\},$ where $p_i+p_j\le 1$, for all $i\sim j$. The polytope of {noncontextual behaviors}, denoted by $\mathcal{P}_{nc}(\gexcl)$, is the convex hull of all deterministic noncontextual behaviors. Behaviors that do not lie in $\mathcal{P}_{nc}(\gexcl)$ are {contextual}. A behavior $p: [n]\to~[0,1]$ is {quantum} if there exists a quantum state $\rho$ and projectors $\Pi_1,\ldots \Pi_n$ acting on a Hilbert space $\mathcal{H}$~where 
\be\label{cwfere}
p_i=\Tr(\rho\Pi_i), \forall i\in [n] \text{ and } \Tr(\Pi_i\Pi_j)=0, \text{ for } i\sim j.
\ee 
We refer to the realization $\rho, \{\Pi\}_{i=1}^n$ satisfying \eqref{cwfere} as a quantum realization of the behavior $p$. 
The convex set of all quantum behaviors is denoted by $\mathcal{P}_{q}(\gexcl)$. For the purposes of this manuscript, we will denote a quantum realization by $\{\ket{u_i}\bra{u_i}\}_{i=0}^n$. Furthermore, $p_i \equiv \vert \bra{u_o} u_i \rangle \vert ^2 , \forall i\in [n] \text{ and } p \in \mathcal{P}_{q}(\gexcl)$. A noncontextuality inequality corresponds to a halfspace that contains the set of noncontextual behaviours, i.e., 
\beq\label{eq:ncineq}
 \sum_{i \in [n]} w_i p_i \leq B_{nc}(\gexcl, w), 
\enq
for all $p \in \mathcal{P}_{nc}(\gexcl),$ where $w_1,\ldots,w_n\ge 0$.% and $B_{nc}$ are real numbers.
%Here, $B_{nc} $ is the noncontextual hidden variable bound on the linear expression $S_w$ and is obtained by taking behaviours over $B_{nc}(\gexcl)$. 
The quantum supremum of the noncontextuality inequality~\eqref{eq:ncineq}, denoted by $B_{qc}(\gexcl, w)$, is the largest value of the expression $\sum_{i \in [n]} w_i p_i$, as $p$ ranges over the set of quantum behaviors $\mathcal{P}_{q}(\gexcl)$.
%However, it is possible to violate this bound by taking behaviours over $B_q(\gexcl)$ and we will represent the maximum achievable bound as $B_{cq}.$

%As a first step, we introduce the appropriate notion of uniqueness.
Motivated by Bell self-testing, we now introduce a natural notion of ``uniqueness'' for the quantum realizations $\{\ket{u_i}\bra{u_i}\}_{i=0}^n$ that attain the quantum supremum of a noncontextuality inequality. In this setting, uniqueness refers to identifying the state and measurement operators that 
achieve the quantum supremum, up to a {\em global isometry}. 
The notion of uniqueness and robustness appropriate for our work is introduced below. 
\begin{definition} (\textbf{Self-testing})
 A noncontextuality inequality  $\sum_{i \in [n]} w_i p_i \leq B_{nc}(\gexcl, w)$ is a self-test for the realization $\{\ket{u_i}\bra{u_i}\}_{i=0}^n$~if:
\begin{enumerate}
\item $\{\ket{u_i}\bra{u_i}\}_{i=0}^n$ achieves the quantum supremum $B_{qc}(\gexcl, w)$;
\item For any other realization $\{\ket{u_i'}\bra{u_i'}\}_{i=0}^n$ that also achieves $B_{qc}(\gexcl, w)$, there exists an isometry $V$ such that 
%We say that Given a behaviour $p\in B_Q(\gexcl)$ such that $p$ achieves the maximum quantum bound (i.e $B_{qc}$) on a noncontextuality inequality defined for $\gexcl$ and $\{\ket{u_i}\}_{i=0}^n$ be a quantum realization of $p$. %That is, $p_i=|\braket{u_0}{u_i}|^2 $ satisfying $\braket{u_i}{ u_j}=0$ whenever $(i, j) \in E(\gexcl)$. 
\beq\label{eq:st}
V\ket{u_i}\bra{u_i}V^{\dagger}=\ket{u_i'}\bra{u_i'}, \quad 0\le i\le n.
\enq
\end{enumerate}
\end{definition}
%Let $\sum_{i \in [n]}w_i p_i , w_i \geq 0 $ be the linear expression corresponding to a noncontextuality inequality for a scenario represented by exclusivity graph $\gexcl$ with noncontextual hidden variable bound $B_{nc}$ and quantum bound $B_{cq}.$ Let $p \in B_Q(\gexcl)$ be a self-test for the realization $\{\ket{u_i}\}_{i=0}^n$ 
%manifested via the same noncontextuality inequality. 
\begin{definition}(\textbf{Robustness}) A noncontextuality inequality $\sum_{i \in [n]} w_i p_i \leq B_{nc}(\gexcl, w)$ is an $(\epsilon, r)$-robust self-test for $\{\ket{u_i}\bra{u_i}\}_{i=0}^n$ if it is a self-test, and furthermore, for any other realization $\{\ket{u_i'}\bra{u_i'}\}_{i=0}^n$ satisfying 
$$\sum_{i=1}^n w_i |\braket{u_i'}{u'_0}|^2\ge B_{qc}(\gexcl, w)-\epsilon,$$
there exists an isometry $V$ such~that 
\be\label{eq:rst}
\|V\ket{{u_i}}\bra{{u_i}}V^{\dagger} - \ket{{u'_i}}\bra{{u'_i}}\| \leq \mathcal{O}\left({\epsilon}^r\right), \quad 0\le i\le n. \ee 
\end{definition}
%
%We focus on a particular noncontextuality inequality, namely the Klyachko-Can-Binicio\ifmmode \breve{g}\else \u{g}\fi{}lu-Shumovsky (KCBS) inequality, which bounds the sum of the probabilities of the occurrence of five individual events
%%where there are five measurement events
% that are cyclically exclusive \cite{KCBS}. The % Choosing as our figure of merit the sum of the probabilities of the occurrence of the five individual events, 
% largest value of the KCBS inequality within any noncontextual hidden variable (NCHV) theory is equal to 2~\cite{add}, whereas quantum theory violates the NCHV bound, thus establishing the contextual nature of quantum theory \cite{add}. Specifically, it is well-known that
%\be\label{kcbs5}
%\sum_{i=1}^5 \Tr(\rho \Pi_i) \le\sqrt{5},
%\ee
% for all quantum states $\rho$ and projectors $\Pi_1,\ldots \Pi_5$ acting on a single Hilbert space $\mathcal{H}$ that satisfy $\Tr(\Pi_i\Pi_{i+1})=~0$. Moreover, the quantum supremum $\sqrt{5}$ can be attained by a qutrit state, a fact that also identifies the smallest single indivisible quantum system that witnesses contextuality.
% %Analogously 
%The formal definition is given in \eqref{eq:st} and~\eqref{eq:rst}. 
This definition of self-testing is in stark contrast to the case of Bell self-testing, where uniqueness is defined up to {\em local isometries} (recall \eqref{bell:ST}) to account for
 the physical operational freedom of spacelike separated parties to pre-process
their local quantum systems and measurements.
 %the tensor product structure of the underlying Hilbert~space. 
 Furthermore, unlike the case of Bell self-testing, in contextuality scenarios there is no meaningful sense in
which the state can be self-tested in isolation, rather, a state is always self-tested in relation to a measurement.
On a side note, it is worth noticing that closeness between a pair of quantum  \emph{realizations} implies closeness of the corresponding pair of  \emph{quantum} behaviors. 

%%%%%%%%%%%%%%%%%%%%%%%%%%%%%%%%%%%%%%%%%%%%%%%%%%%%%%%%%%%%%%%%%%%

\medskip 
%\section{Main result}\label{sec:main}
\noindent {\em How to show self-testing.---}To show that a noncontextuality inequality is a self-test we rely on the connection with SDPs established in \cite{CSW}, where it was shown that the quantum supremum of a noncontextuality inequality
 \be\label{cdfvdfb}
 \max \left\{ \sum_{i=1}^{n} w_ip_i : p \in \mathcal{P}_{q}(\gexcl) \right \},
 \ee
 is equal to the value of the following SDP: 
 \be\label{theta:primalmain}
 \begin{aligned} 
\vartheta(\gexcl,w) = \max & \ \ \sum_{i=1}^n w_i{X}_{ii} \\
\text{ subject to} & \ \ {X}_{ii}={ X}_{0i}, \ \ 1\le i\le n,\\
 & \ \ { X}_{ij}=0, \ \ i\sim j,\\
& \ \ X_{00}=1,\ \ X\in \mathcal{S}^{1+n}_+,
\end{aligned}
\ee 
 % 
% \be\label{cnintro}
%\begin{aligned} 
%\max & \ \sum_{i=1}^n { X}_{ii} \\
%\text{ subject to } & \ { X}_{ii}={ X}_{0i}, \quad 1\le i \le n,\\
% & \ { X}_{i(i+1)}=0,\quad 1\le i\le n,\\
%& \ X_{00}=1,\ X\in \mathcal{S}^{1+n}_+,
%\end{aligned}
%\ee
where $\mathcal{S}^{1+n}_+$ denotes the cone of positive semidefinite matrices of size $n+1$. The optimization program \eqref{theta:primalmain} is known as the Lov\'asz theta number of the vertex-weighted graph $(\gexcl,w)$ \cite{lovasz1979shannon} where vertex weighted graph refers to a graph where a weight is assigned to each vertex.
Moreover, the equivalence between the optimization problems \eqref{cdfvdfb} and \eqref{theta:primalmain} also induces a correspondence between their optimal solutions. \emph{Specifically, if $p \in \mathcal{P}_{q}(\gexcl)$ is optimal for \eqref{cdfvdfb} and $ \{\ket{u_i}\bra{u_i}\}_{i=0}^n$ is a quantum realization of $p$, the Gram matrix of the vectors $\ket{u_0},\braket{u_0}{u_1}\ket{u_1},\ldots,\braket{u_0}{u_n}\ket{u_n}$ corresponds to an optimal solution for~\eqref{theta:primalmain}. 
Conversely, for any optimal solution $X={\rm Gram}(\ket{u_0},\ket{u_1},\ldots,\ket{u_n})$ of the SDP \eqref{theta:primalmain}, %if $u_0, u_1, \ldots, u_n$ is an arbitrary Gram decomposition of $X$, 
the realization $\{{\ket{u_i}\bra{u_i} \|\ket{u_i}\bra{u_i}\|^{-1}}\}_{i=0}^n$ is optimal for \eqref{cdfvdfb}.}
This correspondence leads to the following three-step proof strategy for showing that the noncontextuality inequality \eqref{cdfvdfb} is an $(\epsilon, {1\over 2})$-robust self-test.
%\begin{mybox}[colback = white, width = 8.7cm]{Proof Strategy}
\begin{itemize}
 \item First, show that the SDP \eqref{theta:primalmain} has a unique optimal solution $X^*$. 
 \item Second, show that any $\epsilon$-suboptimal solution $X$ of \eqref{theta:primalmain}, i.e. a feasible $X$ where $\sum_iw_iX_{ii}\ge \vartheta(\gexcl)-\epsilon$, satisfies $\|\tilde{X}-X^*\|_F\le \mathcal{O}(\epsilon)$. 
\item Third, show that for two positive semidefinite matrices that are $\epsilon$-close in Frobenius distance, the vectors in their Gram decompositions are $\mathcal{O}\left(\sqrt{\epsilon}\right)$ close in $2$-norm. % On the other hand, we were unable to show that \eqref{theta:primalmain} admits a unique solution for arbitrary exclusivity graphs, but instead, we identified a sufficient condition guaranteeing this to be the~case. 
%As a last step, we rely on the general fact that for
 %two positive semidefinite matrices that are $\epsilon$-close in Frobenius distance, their Gram decompositions are $\mathcal{O}\left(\sqrt{\epsilon}\right)$ close, cf. Theorem~\ref{equivalentapprox} in the Supplementary material (SM)
 \end{itemize}

%\end{mybox}
Whenever the first step holds, the second step is satisfied for the SDP \eqref{theta:primalmain} and the third step is always true. 
 The proofs of the first two steps   hinge on the rich duality theory of SDPs. Specifically, 
the Lagrange dual of the
 SDP~\eqref{theta:primalmain}, is given by the least scalar $t\ge 0$ for which 
\be \label{dualcnintro}
 %tE_{00}+\sum_{i=1}^n \lambda_iE_{0i}+\sum_{i=1}^n(\nu_i-\lambda_i)E_{ii}+\sum_{i\sim j} \mu_{ij}E_{ij}\succeq 0,
 Z \equiv  tE_{00}+\sum_{i=1}^n (\lambda_i-w_i)E_{ii}-\sum_{i=1}^n\lambda_i E_{0i}+\sum_{i\sim j} \mu_{ij}E_{ij}  \succeq 0,
\ee where $E_{ij} = \frac{e_ie_j^{\top} + e_je_i^{\top}}{2}$ and the column vectors  $\{e_i\}_{i= 0}^n$ form the standard basis of $\R^{n+1}$. Furthermore, $\lambda_i$, $w_i$ and $\mu_{ij}$ are the Lagrange multipliers corresponding to the constraints of the primal SDP  \eqref{theta:primalmain}. %Call this dual variable matrix as $Z$. 
 The first tool  we use is a sufficient condition for showing that an arbitrary  SDP admits a unique optimal solution, in terms of the existence of an appropriate   optimal solution for its dual problem, see Theorem~\ref{uniqueness} in the Appendix. %  that satisfies certain linear conditions. Specifically, see Theorem \ref{uniqueness} in SM.
The second crucial tool  are  error bounds for SDPs, which allow to  bound  the distance of a feasible solution from the set  of optimal solutions, in terms of  the suboptimality of the objective function, see  Theorem~\ref{thm:errorbounds} in the Appendix. Combining these  two tools, we arrive at   our main technical tool, allowing  to show that a non-contextuality inequality is a~self-test:

%first derived in \cite{sturm} (see also~\cite{wolk} for a modern exposition). Given two sets $Q_1, Q_2$ lying in some Euclidean space, we say that they satisfy a {\em H\"older error bound} with exponent $q\ge 0$, if for any compact set $U$ we have that 
%\be{\rm dist}(x,Q_1\cap Q_2)=\mathcal{O}({\rm dist}^q(x,Q_1)+{\rm dist}^q(x,Q_2)), \ \forall x\in U.   
%\ee
%Recall that for a set $Q\subseteq \R^n$, the corresponding distance function is given by ${\rm dist}(x,Q)=\inf\{ \|x- q\|: q\in Q\}.$
%Clearly, if $Q$ is closed, there always exists a point that achieves the minimum distance. If additionally $Q$ is convex, the distance is minimized at a unique point of $Q$. 
%As it turns out, {\em feasibility SDPs} satisfy H\"olderian error bounds, where the H\"older exponent depends neither on the size of the matrices $n$ nor the number of affine constraints~$m$, see Theorem \ref{thm:errorboundsfeas} in SM. Based on this, one can prove error bounds for arbitrary SDPs, see Theorem \ref{thm:errorbounds} in SM.

\begin{main result}  \label{res:uniqueness} Consider a  noncontextuality inequality 
$\sum_{i=1}^{n} w_ip_i \le B_{nc}(\gexcl, w)$. Assume that  
\begin{enumerate}
\item There exists an optimal quantum realization $\{\ket{u_i}\bra{u_i}\}_{i=0}^n$ such that  $$ \sum_i w_i \vert \braket{u_i}{u_0}\vert^2 = B_{qc}(\gexcl, w)$$ and $\braket{u_0}{u_i}\ne 0, $ for all $1\le i\le n$, and
\item There exists   a dual optimal solution $Z^*$ for the SDP \eqref{dualcnintro} such  that the homogeneous linear system 
\be \label{algcon}
\begin{aligned}
M_{0,i}  &= M_{i,i},  \text{ for all } 1 \leq i \leq n, \\
M_{i,j} &= 0,  \text{ for all } i \sim j, \\
MZ^*&= 0,
\end{aligned}
\ee
in the symmetric matrix variable $M$ only admits the trivial solution $M=0$.
\end{enumerate}
Then,  the noncontextuality inequality is an $(\epsilon, {1\over 2})$-robust self-test for $\{\ket{u_i}\bra{u_i}\}_{i=0}^n$.
\end{main result}
%the primal SDP \eqref{primalintro} has a unique optimal solution. 

The proof of the main  theorem is  deferred to Section~\ref{mainresultproof} in the Appendix. In the next section we shift our focus to a particular instance of the main theorem, namely the KCBS non-contextuality inequalities. Specifically, we show that for the KCBS  inequalities, condition \eqref{algcon} is satisfied and therefore, such inequalities are robust~self-tests.

\medskip 
\noindent {\em An application:~The KCBS inequalities.---}A celebrated noncontextuality inequality is the Klyachko-Can-Binicio\u{g}lu-Shumovsky (KCBS) inequality, first introduced for $n=5$ in \cite{KCBS} and subsequently generalized to general odd values of $n$ \cite{badziag2011pentagrams,Liang}. The KCBS inequality corresponds to an odd number of measurement events $e_1,\ldots, e_n$ with the property that $e_i$ and $e_{i+1}$ are exclusive, where indices are taken modulo $n$. The corresponding exclusivity graph is the $n$-cycle and the set of noncontextual behaviors is $\mathcal{P}_{nc}(C_n)$. 
Concretely, for any odd $n$, the ${\rm KCBS}_n$ noncontextuality inequality is given~by
\beq
\max\left\{ \sum_{i=1}^{n} p_i: p\in \mathcal{P}_{nc}(C_n)\right\}={(n-1)\over 2}. \\
\enq 
The ${\rm KCBS}_n$ inequality witnesses quantum contextuality, as quantum behaviors can achieve values greater then ${(n-1)/ 2}$. Specifically, for any odd $n$, we have that 
\be\label{q:valuekcbs}
 \max \left\{ \sum_{i=1}^{n} p_i : p \in \mathcal{P}_{nc}(C_n) \right \} = \frac{ n \cos \pi/ n}{ 1+ \cos\pi / n},
\ee 
and a quantum behavior in $\mathcal{P}_{q}(C_n)$ that achieves the quantum supremum of the ${\rm KCBS}_n$ inequality is: 
\be\label{behavpn}
 p^{(n)}_i=\frac{ \cos \pi/ n}{ 1+ \cos\pi / n}, \ 1\le i\le n.
\ee

A quantum realization $ $ which achieves the quantum supremum corresponds to  
\be \label{realization}
\begin{aligned}
\ket{u_0} = \left(1,0,0\right)^{T} \text{ and} \\
\ket{u_j} = \left( \cos(\theta), \sin(\theta)\sin\left( \phi_j\right), \sin(\theta)\cos\left(\phi_j\right) \right)^{T},
\end{aligned}
\ee
 where 
 $\cos^2(\theta)= \frac{ \cos\left(\pi/ n\right)}{ 1+ \cos\left(\pi / n\right)} \text{ and } \phi_j = \frac{j \pi (n-1)}{n} \text{ for } 1\le j\le n.$
As it turns out, the generalised KCBS inequality satisfies the assumptions of the Main Theorem, 
and the resulting self-testing statement is formally stated as~follows:

\begin{cor}\label{maincorollary}
 For any odd integer $n$, the $KCBS_n$ inequality is an $(\epsilon,{1\over 2})$-robust self-test for the realization corresponding to equation \eqref{realization}.%$\{\ket{v_i}\}_{i=1}^0$.}
 %\be\label{KCBSrealization}
 %\begin{aligned}
%\ket{v_j} &= \left( \cos(\theta), \sin(\theta)\sin\left(\frac{j \pi (n-1)}{n}\right), \sin(\theta)\cos\left(\frac{j \pi (n-1)}
%{n}\right) \right)^\top,
%\end{aligned}
%\ee
\end{cor}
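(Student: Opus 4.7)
The plan is to verify the two hypotheses of the Main Theorem for the KCBS setting, namely the cycle graph $\gexcl = C_n$ with uniform weights $w_i = 1$. The first hypothesis follows by inspection of \eqref{realization}: a direct computation gives $|\braket{u_0}{u_j}|^2 = \cos^2(\theta) = \cos(\pi/n)/(1+\cos(\pi/n))$, matching $p^{(n)}_j$ in \eqref{behavpn}, so the realization attains the quantum supremum \eqref{q:valuekcbs}; moreover $\braket{u_0}{u_j} = \cos(\theta) > 0$ for every odd $n\geq 5$, so the non-vanishing clause also holds.

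The substantive task is the second hypothesis: exhibit a dual optimal $Z^*$ for which the linear system \eqref{algcon} admits only $M = 0$. The $\mathbb{Z}_n$-symmetry of the problem suggests the ansatz $\lambda_i = \lambda$ and $\mu_{i,i+1} = \mu$ (constant along the cycle), with $t = \vartheta(C_n, \mathbf{1})$. Substituting into \eqref{dualcnintro} and enforcing complementary slackness with the primal optimum $X^* = \mathrm{Gram}(\ket{u_0}, \braket{u_0}{u_1}\ket{u_1}, \ldots, \braket{u_0}{u_n}\ket{u_n})$ should pin down $\lambda$ and $\mu$ uniquely. The spectrum of the resulting $Z^*$ is accessible by diagonalising its circulant $n \times n$ submatrix on the Fourier basis of $\mathbb{Z}_n$; from these eigenvalues one verifies strict complementarity, i.e.\ $\operatorname{rank}(Z^*) = n-2$, so that $\ker(Z^*) = \mathrm{range}(X^*)$ is exactly the three-dimensional subspace of $\mathbb{R}^{n+1}$ spanned by the columns of $X^*$.

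For any symmetric $M$ with $MZ^* = 0$ the range of $M$ lies in $\ker(Z^*)$, so $M = Y^\top C Y$ for some symmetric $3\times 3$ matrix $C$, where $Y$ is the $3 \times (n+1)$ matrix whose columns are $\ket{u_0}$ and $\braket{u_0}{u_j}\ket{u_j}$ from \eqref{realization}; equivalently $M_{ij} = y_i^\top C y_j$. The main obstacle is now to show that the remaining linear constraints $M_{0j} = M_{jj}$ and $M_{j,j+1} = 0$ force $C = 0$. Written out using \eqref{realization}, they become $2n$ trigonometric relations in the six entries of $C$ with coefficients polynomial in $\cos\phi_j$ and $\sin\phi_j$ for $\phi_j = j\pi(n-1)/n$. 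Expanding these relations in the characters of $\mathbb{Z}_n$ decouples them into blocks supported on finitely many Fourier modes, and because $n$ is odd and $n\geq 5$ the relevant modes are distinct and non-vanishing, yielding enough independent equations to force every entry of $C$ to vanish. With both hypotheses of the Main Theorem verified, the $(\epsilon, 1/2)$-robust self-test follows at once; the least routine step, and the one I expect to absorb most of the work, is the final non-degeneracy argument for the system in $C$, which is a purely algebraic claim about the geometry of the odd-cycle realization that must hold uniformly in $n$.
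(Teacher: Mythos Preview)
Your route differs from the paper's and is arguably more structural, but there is one real gap you must close.

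On the comparison: after constructing the same circulant dual optimal $Z^*_n$, the paper does \emph{not} invoke strict complementarity or parametrise $M$. It argues entry by entry: from the $(i,i\pm 1)$ entries of $MZ^*_n = 0$ it extracts $A_{i,i\pm 2} = (2\cos(\pi/n)-1)\,A_{ii}$ (where $A$ is the lower-right $n\times n$ block of $M$), symmetry of $A$ then forces all $A_{ii}$ equal, the trace condition kills the diagonal, and the remaining off-diagonal entries fall by walking around the cycle. Your approach instead uses $\operatorname{rank}(Z^*_n) = n-2$ (which is correct: the Schur complement has exactly three zero eigenvalues, at $k=0$ and $k=(n\pm 1)/2$) to collapse the unknown to a symmetric $3\times 3$ matrix $C$. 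This is conceptually cleaner and lets the three-dimensionality of the optimal realisation do the work, at the cost of first establishing strict complementarity.

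On the gap: the constraints you list, $M_{0j} = M_{jj}$ and $M_{j,j+1} = 0$, do \emph{not} by themselves force $C = 0$. Expanding $y_j^\top C y_k$ in the Fourier modes $1,\sin\phi_j,\cos\phi_j,\sin 2\phi_j,\cos 2\phi_j$ as you propose yields only $C = cI_3$ for an undetermined scalar $c$, i.e.\ $M = cX^*$; this $M$ satisfies all of your $2n$ relations because $X^*$ is primal feasible and $X^*Z^* = 0$ by complementary slackness. (You can check directly that the edge constraint contributes nothing new: $a^2 + b^2\cos\delta = 0$ for $\delta = \pi(n-1)/n$.) What pins down $c = 0$ is the additional constraint $M_{00} = 0$, which arises via Theorem~\ref{uniqueness} from the primal constraint $X_{00} = 1$ and is used explicitly in the paper's proof of Theorem~\ref{wherethemagichappens}. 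The display~\eqref{algcon} in the Main Theorem omits it---a presentational slip in the paper---but you cannot: with $M_{00} = y_0^\top C y_0 = c_{11} = 0$ adjoined, your Fourier reduction closes immediately, since $C = cI_3$ and $c_{11} = 0$ give $C = 0$.
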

By the Main Theorem, the proof of the corollary boils down to finding a dual optimal solution satisfying~\eqref{algcon}. 
In Section~\ref{oddcycleunique} of the Appendix, we show that for any odd integer~$n$, the following matrix has the desired properties:
 \be\label{zmatrix}
{ Z}^*_n = \left[ 
\begin{array} {c | c}
\vartheta(C_n) & -e_n^\top\\
\hline 
-e_n & I_n+\frac{n -\vartheta(C_n)}{2 \vartheta(C_n)}A_{C_n}
\end{array}
\right]\in \R^{(1+n)\times (1+n)},
\ee
where $e_n$ is the all-ones column vector of length $n$, and $A_{C_n}$ is the adjacency matrix of the cycle graph $C_n$. 
 
 Concretely, for $n=5,$ the dual optimal solution  is 
%
%%Solving the dual for 5-cycle we get the following dual optimal solution: 
%%$$t=\sqrt{5}, \quad \mu_i=-2, \quad \lambda_{ij}={5-\sqrt{5}\over 2\sqrt{5}}, i.e., $$
\be\label{dualsolution}
Z_5^{\star}=\left(\begin{array}{c|ccccc} 
\sqrt{5} & -1 & -1 & -1 & -1 & - 1\\
\hline
-1 & 1 & c& 0 & 0 & c\\
-1 & c & 1 & c& 0 & 0 \\
-1 & 0 & c & 1 & c & 0\\
-1 & 0 & 0 & c & 1 & c\\
-1& c & 0 & 0& c & 1
\end{array}\right), 
\ee
where $c={5-\sqrt{5}\over 2\sqrt{5}}.$ 
%The corresponding $M$ matrix (introduced in equation \eqref{algcon}) is given by
Robust self-testing for  the five cycle KCBS inequality corresponds to showing that 
%$M_5 = 0 $ is
 the only solution to the  linear system  $M_5 Z_5^{\star} = 0$ of the form 
 \be\label{x_5}
M_5=\left(\begin{array}{c|ccccc} 
0 & m_1 & m_2 & m_3 & m_4 & m_5\\
\hline
m_1 & m_1  & 0&  m_6& m_9 & 0\\
m_2 & 0 & m_2  & 0& m_7 & m_{10} \\
m_3 & m_6& 0 & m_3  & 0& m_8\\
m_4 & m_9 & m_7 & 0& m_4 & 0\\
m_5 & 0 & m_{10} & m_8& 0 & m_5 
\end{array}\right),
\ee
is the matrix of all-zeros. 

\medskip 
\noindent {\em Conclusions.---}In this work we introduced an appropriate extension of the notion of Bell self-testing to 
the framework of contextuality, where the noncontextuality assumption is not enforced via locality. In our main technical result, we identified a sufficient condition for showing that an arbitrary noncontextuality inequality is a robust self-test. As an application of our main theorem, we showed that the celebrated KCBS noncontextuality inequalities are robust self-tests. Our main theorem is not restricted to KCBS inequalities and can be used to self-test other noncontextuality inequalities, given they satisfy the necessary conditions; this will the be the topic of future investigations. Equally important, our proof techniques leverage a largely unnoticed connection between unicity problems in physics with uniqueness properties of optimization problems, which we believe will be of independent interest to the physics community. 

\medskip 

%%%%%%%%%%%%%%%%%%%%%%%%%%%%%%%%%%%%%%%%%%%%%%%%%%%%%%%%%%%%%%%%%%%

\noindent {\em Acknowledgements.---}We thank Anurag Anshu for helpful discussions concerning Lemma~\ref{equivalentapprox} in the Appendix. We also thank Le Phuc Thinh, Srijita Kundu, and Cai Yu for helpful discussions. We thank the National Research Foundation of Singapore, the Ministry of Education of Singapore, MINECO Project No.~FIS2017-89609-P with FEDER funds, the Conserjer\'{\i}a de Conocimiento, Investigaci\'on y Universidad, Junta de Andaluc\'{\i}a and European Regional Development Fund (ERDF) Grant No.~SOMM17/6105/UGR, and the Knut and Alice Wallenberg Foundation for financial support. AV is supported by the NUS Young Investigator award R-266-000-111-133 and by an NRF Fellowship (NRF-NRFF2018-01 \& R-263-000-D02-281).

\bibliography{ST.bib}
\bibliographystyle{apsrev4-1}
%%%%%%%%%%%%%%%%%%%%%%%%%%%%%%%%%%%%%%%%%%%%%%%%%%%%%%%%%%%%%%%%%%%%
%
\appendix
%
%%%%%%%%%%%%%%%%%%%%%%%%%%%%%%%%%%%%%%%%%%%%%%%%%%%%%%%%%%%%%%%%%%%%
\section{Semidefinite Programming}\label{appsdp}
In this section we briefly review the  most relevant results from  the theory of semidefinite programming, which  we use in the proof of  our main theorem.
%%%%%%%%%%%%%%%%%%%%%%%%%%%%%%%%%%%%%%%%%%%%%%%%%%%%%%%%%%%%%%%%%%%

A pair of primal/dual SDPs is given by
\begin{align}
& \underset{{ X}}{\text{sup}} \left\{ \la C,{ X}\ra : X \in \mcs^n_+,\ \la A_i,{ X} \ra=b_i \ (i\in [m]) \right\}\tag{{P}}\label{primalintro},\\
&\underset{y,Z}{\text{inf}}\left\{ \sum_{i=1}^m b_iy_i\ :\ \sum_{i=1}^my_iA_i-C=Z\in \mcs^n_+\right\}\tag{{D}}\label{dualintro},
\end{align}
which we respectively denote by $(P)$ and $(D)$. Throughout, the set $\{1,2,\cdots, m \}$ is denoted by $[m].$ The sets 
\begin{equation*}
\begin{aligned} 
& \mcs^n_+ \cap \left\{ X\in \mcs^n: \la A_i,{ X} \ra=b_i \ (i\in [m]) \right\}, \\
 & \mcs^n_+\cap \left\{ \sum_{i=1}^my_iA_i-C: y\in \R^m\right\},
 \end{aligned}
 \end{equation*}
are the primal and dual feasible regions, respectively. Furthermore, a SDP is called {\em strictly feasible} if it has a full-rank (i.e., positive definite) feasible solution. 

The following result summarises SDP duality theory.

\begin{theorem}\label{sdpthm}
Consider a pair of primal dual SDPs $(P)$ and $(D)$.
The following properties hold:
\begin{itemize}

\item[$(i)$] (Complementary slackness) Let $X, (y,Z)$ be a pair of primal-dual feasible solutions for $(P)$ and $(D)$, respectively. Assuming that $p^*=d^*$ we have that $X, (y,Z)$ are primal-dual optimal if and only if $\la X, Z\ra=0.$
\item[$(ii)$] (Strong duality) Assume that $ d^*>-\infty$ (resp. $p^* <~+\infty$) and that (D) (resp. (P)) is strictly feasible. Then $p^*=d^*$ and furthermore, the primal (resp.\ dual) optimal value is attained. 
\end{itemize}
\end{theorem}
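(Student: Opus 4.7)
The plan is to derive both parts from the \emph{weak duality identity}, which follows by a one-line substitution using primal feasibility $\la A_i, X\ra = b_i$ and the definition $Z = \sum_i y_i A_i - C$:
\[
\sum_{i=1}^m b_i y_i - \la C, X\ra = \la \sum_i y_iA_i - C, X\ra = \la Z, X\ra.
\]
Since $X, Z\succeq 0$, the right-hand side equals $\Tr(X^{1/2}ZX^{1/2})\ge 0$, with equality if and only if $X^{1/2}ZX^{1/2}=0$, equivalently $ZX=0$, equivalently $\la X, Z\ra=0$.

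For part $(i)$ the identity already does the work. Under the hypothesis $p^*=d^*$, a feasible pair $X,(y,Z)$ is simultaneously primal-dual optimal iff the individual duality gap $\sum_i b_iy_i - \la C,X\ra$ vanishes, which by the identity and the PSD-trace characterization is equivalent to $\la X, Z\ra = 0$.

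For part $(ii)$, I would prove the version in which $(D)$ is strictly feasible and $d^*>-\infty$; the other direction is symmetric after swapping primal and dual roles. Weak duality already gives $p^*\le d^*$, so the task is to produce a primal feasible $X^\star\in\mcs^n_+$ attaining value $d^*$. The plan is a separating-hyperplane argument applied to the convex image of $\mcs^n_+$ under the map $X\mapsto \br{\la A_1,X\ra,\ldots,\la A_m,X\ra,\la C,X\ra}$: if no primal optimum is attained at level $d^*$, then one separates $(b_1,\ldots,b_m, d^*)$ from this image (enlarged by the downward ray in the last coordinate) to obtain multipliers $(y,\alpha)\ne 0$ with $\alpha\ge 0$, $\alpha C - \sum_i y_iA_i \preceq 0$, and $\sum_i y_ib_i \le \alpha d^*$. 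Strict feasibility of $(D)$ then rules out the degenerate case $\alpha=0$: any such nontrivial $y$ would define a recession direction of the dual feasible set along which either the dual objective tends to $-\infty$, contradicting $d^*>-\infty$, or, after pairing $\sum_i y_iA_i\succeq 0$ with the strictly feasible witness $Z^0\succ 0$, one forces $y=0$, contradicting nontriviality. Dividing through by $\alpha>0$ yields a dual feasible point of value $\le d^*$, and combined with weak duality gives $p^*=d^*$ with the primal maximum attained.

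The principal technical obstacle is closedness of the relevant convex image of $\mcs^n_+$ under the linear map $X\mapsto (\la A_i, X\ra)_i$; without this, the separating hyperplane delivers only an asymptotic inequality rather than one attained by some $X^\star$, and the primal supremum may fail to be attained even though $p^*=d^*$. This is exactly where strict feasibility (Slater's condition) enters the argument, since it both supplies the needed closedness and provides the positive definite witness $Z^0\succ 0$ used to eliminate the degenerate multiplier. Rather than reproving these closedness facts from first principles, it would be natural to invoke the conic duality theorem of Rockafellar or Nesterov--Nemirovski specialized to the PSD cone, from which the statement follows directly.
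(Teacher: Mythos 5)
The paper offers no proof of this theorem at all: it is presented as a summary of standard SDP duality theory, so any argument you supply is necessarily a different route from the paper's (which is simply to cite folklore). Your proof of part $(i)$ is complete and correct: the identity $\sum_i b_iy_i-\la C,X\ra=\la Z,X\ra=\Tr(X^{1/2}ZX^{1/2})\ge 0$ holds for every feasible pair, and under the hypothesis $p^*=d^*$ the chain $\la C,X\ra\le p^*=d^*\le\sum_i b_iy_i$ collapses to equalities exactly when this gap vanishes, which is the asserted equivalence.

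Part $(ii)$ is where the real content lies, and two sub-steps of your sketch do not close as written. First, in the degenerate case $\alpha=0$ your separation yields $\sum_i y_iA_i\succeq 0$ and $\sum_i y_ib_i\le 0$; when $\sum_i y_ib_i<0$ the recession argument does give a contradiction with $d^*>-\infty$, but when $\sum_i y_ib_i=0$ the proposed fix fails: pairing $\sum_i y_iA_i\succeq 0$ with a strictly feasible witness $Z^0\succ 0$ only gives $\la \sum_i y_iA_i,Z^0\ra\ge 0$, which carries no information about $y$, and even $\sum_i y_iA_i=0$ would not force $y=0$ without linear independence of the $A_i$. Ruling out this case is precisely the delicate point where \emph{proper} separation (the functional not vanishing identically on the separated set) must be invoked. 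Second, the concluding sentence is a non sequitur: from the assumed non-attainment you manufacture a dual feasible point of value at most $d^*$, but since $d^*$ is the dual infimum this merely shows the \emph{dual} optimum is attained and contradicts nothing, so primal attainment does not follow from the argument as written. Your closing move --- invoking the conic duality theorem of Rockafellar or Nesterov--Nemirovski for the PSD cone --- is legitimate and, given these gaps, is where the proof actually lives; since the paper itself states the theorem without proof, citing that result is the appropriate resolution, but the intermediate separation sketch should not be presented as self-contained.
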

 
The following result gives a sufficient condition for showing that a SDP admits a unique optimal solution.

\begin{theorem}\cite{alizadeh}\label{uniqueness}
%Consider a pair of primal dual semidefinite programs as in 
Assume that the optimal values of~\eqref{primalintro} and~\eqref{dualintro} are equal and that both are attained. Let $Z$ be a dual optimal solution with the additional property that the homogeneous linear system 
\be\label{thm:uniqueness}
MZ=0, \ \la M,A_1\ra=0,\ldots, \la M,A_m\ra=0,
\ee
in the symmetric matrix variable $M$ only admits the trivial solution $M=0$. Then, the primal SDP \eqref{primalintro} has a unique optimal solution. 
\end{theorem}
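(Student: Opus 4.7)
The plan is to take an arbitrary pair of primal optimal solutions $X_1, X_2$ of \eqref{primalintro} and prove $X_1 = X_2$ by showing that $M := X_1 - X_2$ satisfies the homogeneous system \eqref{thm:uniqueness}, whence $M = 0$ by hypothesis. The linear portion is immediate: since both $X_k$ are primal feasible, $\la X_k, A_i\ra = b_i$ for each $i \in [m]$, so subtracting gives $\la M, A_i\ra = 0$. The substance of the argument therefore lies in establishing that $MZ = 0$, which I would derive by showing $X_k Z = 0$ individually for $k = 1, 2$.

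To obtain $X_k Z = 0$, I would invoke complementary slackness, Theorem~\ref{sdpthm}$(i)$: the standing assumption that the primal and dual optimal values coincide and are both attained, combined with the optimality of $X_k$ and $Z$, gives $\la X_k, Z\ra = \tr(X_k Z) = 0$. Since $X_k, Z \in \mcs^n_+$, one can write
\be
\tr(X_k Z) = \tr\!\br{X_k^{1/2} Z X_k^{1/2}} = 0,
\ee
and the matrix $X_k^{1/2} Z X_k^{1/2}$ is itself positive semidefinite with vanishing trace, hence is the zero matrix. Consequently $Z^{1/2} X_k^{1/2} = 0$, so $X_k Z = Z X_k = 0$. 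Subtracting the two such identities for $k=1,2$ yields $MZ = 0$.

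Combining the two ingredients, $M$ is a symmetric matrix simultaneously satisfying $MZ = 0$ and $\la M, A_i\ra = 0$ for all $i \in [m]$. The non-degeneracy hypothesis on $Z$ stated in the theorem then forces $M = 0$, so $X_1 = X_2$ and the primal optimizer is unique. There is no serious obstacle in this argument: the only non-routine manipulation is the passage from $\tr(X_k Z) = 0$ to $X_k Z = 0$, which relies on the PSD factorization above, and everything else is linear bookkeeping together with a direct invocation of strong duality via Theorem~\ref{sdpthm}.
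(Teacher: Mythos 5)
Your argument is correct and complete: the reduction of uniqueness to the homogeneous system via $M = X_1 - X_2$, the use of complementary slackness to get $\la X_k, Z\ra = 0$, and the standard PSD argument upgrading $\tr(X_k Z)=0$ to $X_k Z = 0$ are exactly the right ingredients. The paper itself states this theorem without proof (citing the SDP nondegeneracy literature), and your derivation is the standard one given there, so there is nothing to add or correct.
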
 

A dual optimal solution $Z$ for which \eqref{thm:uniqueness} only has  the  solution $M=0$ is called dual nondegenerate. Geometrically, condition \eqref{thm:uniqueness} expresses that the manifold of $n\times n$ matrices with rank equal to ${\rm rank}(Z)$ intersects the linear space ${\rm span}(A_1,\ldots,A_m)$ transversally at the point~$Z$.

Lastly, we introduce error bounds for SDPs, which allow to  bound  the distance of a feasible solution from the set  of optimal solutions, in terms of  the suboptimality of the objective function 
%, cf. Theorem \ref{}.
 (see \cite{sturm} or ~\cite{wolk} for a modern exposition). Given two sets $Q_1, Q_2$ lying in some Euclidean space, we say that they satisfy a {\em H\"older error bound} with exponent $q\ge 0$, if for any compact set $U$ we have that 
\be{\rm dist}(x,Q_1\cap Q_2)=\mathcal{O}({\rm dist}^q(x,Q_1)+{\rm dist}^q(x,Q_2)), \ \forall x\in U.
\ee
Recall that for a set $Q\subseteq \R^n$, the corresponding distance function is given by ${\rm dist}(x,Q)=\inf\{ \|x- q\|: q\in Q\}.$
Clearly, if $Q$ is closed, there always exists a point that achieves the minimum distance. If additionally $Q$ is convex, the distance is minimized at a unique point of $Q$. 
As it turns out, {\em feasibility SDPs} satisfy H\"olderian error bounds, where the  exponent depends neither on the size of the matrices $n$ nor the number of affine constraints~$m$.
%, see Theorem \ref{thm:errorboundsfeas}.
% in SM. CHECK

\begin{theorem}\label{thm:errorboundsfeas}
Consider the affine space 
$$\mathcal{A}=\{X\in \mcs^n:\ \la A_i,X\ra=b_i, \ (i\in [m])\},$$
where $A_1,\ldots,A_m\in \mathcal{S}^n$ and $b_1,\ldots,b_m\in \R.$ 
Then, for any compact set $U\subseteq \mcs^n$ there exists a constant $c=c(U)>0$ such that for all $X\in U$ we have that:
$${\rm dist}(X,\mathcal{A}\cap \mcs^n_+)\le c\left({\rm dist}^{2^{-d}}(X,\mathcal{A})+{\rm dist}^{2^{-d}}(X,\mcs^n_+)\right),$$
where $d\in \{0,\ldots,n-1\}$ is the {\em singularity degree} of \eqref{primalintro}, defined as the least number of facial reduction steps (as defined in \cite{borwein}),  required to make \eqref{primalintro} strictly~feasible. 
\end{theorem}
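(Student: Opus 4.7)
\begin{proof strategy}
The plan is to prove the H\"older error bound by induction on the singularity degree $d$, following the facial reduction approach of Sturm~\cite{sturm}. For the base case $d=0$, the affine section $\mathcal{A}\cap \mcs^n_+$ is strictly feasible (Slater's condition holds). Combining Hoffman's lemma for the affine constraints defining $\mathcal{A}$ with a perturbation argument near a Slater point produces the \emph{Lipschitz} bound ${\rm dist}(X,\mathcal{A}\cap \mcs^n_+)\le c\bigl({\rm dist}(X,\mathcal{A})+{\rm dist}(X,\mcs^n_+)\bigr)$ uniformly for $X\in U$. The idea is to first project $X$ onto $\mathcal{A}$ at cost ${\rm dist}(X,\mathcal{A})$ and then slide along the segment toward the Slater point; the strictly positive least eigenvalue at the Slater point absorbs the residual PSD violation at linear cost.

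For the inductive step with $d\ge 1$, one round of facial reduction produces a reducing direction $W\in \mcs^n_+$ together with multipliers $y\in\R^m$ such that $W=\sum_i y_i A_i$ and $\sum_i y_i b_i=0$. This forces $\la W, X\ra=0$ for every $X\in \mathcal{A}\cap \mcs^n_+$, so the feasible set lies inside the face $F=\{X\in \mcs^n_+:\ WX=0\}$, which is linearly isomorphic to $\mcs^{n-r}_+$ with $r={\rm rank}(W)$, and the reduced feasibility system $\mathcal{A}\cap F$ has singularity degree exactly $d-1$. Applying the induction hypothesis inside $F$ yields
\[
{\rm dist}(X,\mathcal{A}\cap \mcs^n_+)\le c'\Bigl({\rm dist}^{2^{-(d-1)}}(X,\mathcal{A})+{\rm dist}^{2^{-(d-1)}}(X,F)\Bigr),
\]
so it suffices to produce a uniform square-root bound of the form ${\rm dist}(X,F)=O\bigl(({\rm dist}(X,\mathcal{A})+{\rm dist}(X,\mcs^n_+))^{1/2}\bigr)$ on $U$.

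To establish the square-root estimate, write $X=X_++E$ with $X_+$ the Frobenius projection onto $\mcs^n_+$ and $\|E\|_F={\rm dist}(X,\mcs^n_+)$. The identity $W=\sum_i y_i A_i$ with $\sum_i y_i b_i=0$ gives $\la W, X\ra=\sum_i y_i(\la A_i, X\ra - b_i)$, which by Cauchy-Schwarz is bounded by $\|y\|\,{\rm dist}(X,\mathcal{A})$; combined with $|\la W, E\ra|\le \|W\|_F\|E\|_F$ this delivers $\la W, X_+\ra = O({\rm dist}(X,\mathcal{A})+{\rm dist}(X,\mcs^n_+))$. Working in an eigenbasis of $W$, partition $X_+$ into blocks supported on ${\rm range}(W)$ and its orthogonal complement; positivity of $X_+$ forces the off-diagonal block to be dominated by the geometric mean of the trace norms of the two diagonal blocks, while the diagonal block on ${\rm range}(W)$ has trace at most $\la W, X_+\ra$ divided by the smallest positive eigenvalue of $W$. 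Since the nearest point of $F$ to $X_+$ is obtained by zeroing out exactly these two blocks, and compactness of $U$ bounds the complementary diagonal block, we conclude ${\rm dist}(X_+,F)=O(\sqrt{\la W, X_+\ra})$, hence the claimed square-root bound on ${\rm dist}(X,F)$. Substituting into the inductive inequality and using subadditivity $(a+b)^s\le a^s+b^s$ for $s\in (0,1]$ improves the exponent from $2^{-(d-1)}$ to $2^{-d}$, closing the induction. The main technical obstacle is precisely this square-root step: each facial reduction iteration is responsible for exactly one halving of the exponent, and uniformity of every constant along the induction chain depends crucially on the compactness of $U$ to control the complementary block; see \cite{wolk} for a modern, detailed constant-tracking treatment.
\end{proof strategy}
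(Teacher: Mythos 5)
The paper does not actually prove Theorem~\ref{thm:errorboundsfeas}; it is imported as a black box from Sturm's work on error bounds for linear matrix inequalities (the citations to \cite{sturm} and \cite{wolk} in the surrounding text), and the paper only proves Theorem~\ref{thm:errorbounds} as a corollary of it. Your strategy is, in essence, a faithful reconstruction of the proof in that cited reference: induction on the singularity degree, a Robinson/Hoffman-type Lipschitz bound in the Slater base case, a reducing certificate $W=\sum_i y_iA_i\succeq 0$ with $\sum_i y_ib_i=0$ forcing the feasible set into the face $F=\{X\succeq 0: WX=0\}\cong\mcs^{n-r}_+$, and the square-root estimate on ${\rm dist}(X,F)$ via the trace bound on the ${\rm range}(W)$ block and the $|Q_{ij}|^2\le P_{ii}R_{jj}$ domination of the off-diagonal block. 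The exponent arithmetic (one halving per facial reduction step, subadditivity, and absorbing the weaker exponent ${2^{-(d-1)}}$ into ${2^{-d}}$ using boundedness of distances on the compact set $U$) is also right. So the proposal is sound as a proof strategy and matches the source the paper relies on.

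Two small points you should tighten if you flesh this out. First, the reduced system inside $F$ has singularity degree \emph{at most} $d-1$ (for a reducing direction taken from a minimal facial reduction sequence), not ``exactly'' $d-1$; this is what the induction needs anyway, since a smaller degree only improves the exponent. Second, when you ``apply the induction hypothesis inside $F$'' you are implicitly invoking a Hoffman bound for the intersection of the two affine subspaces $\mathcal{A}$ and ${\rm span}(F)$, i.e.\ ${\rm dist}(X,\mathcal{A}\cap{\rm span}(F))=\mathcal{O}\bigl({\rm dist}(X,\mathcal{A})+{\rm dist}(X,{\rm span}(F))\bigr)$ on $U$, in order to convert the IH (stated with respect to the ambient data of the reduced problem) into a bound involving ${\rm dist}(X,\mathcal{A})$ and ${\rm dist}(X,F)$; this is exactly the same affine-intersection estimate the paper itself uses (via \cite{geometrybound}) in its proof of Theorem~\ref{thm:errorbounds}, so it should be stated rather than left implicit. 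Also note that zeroing the two blocks of $X_+$ gives \emph{a} point of $F$, hence an upper bound on ${\rm dist}(X_+,F)$, which is all you need, but it is not literally the nearest point.
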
 

Based on this, one can prove error bounds for general  (i.e., non-feasibility) SDPs: %see Theorem \ref{thm:errorbounds}.
%in SM. CHECK 

\begin{theorem}\label{thm:errorbounds} 
Consider a pair of primal/dual SDPs $(P)$ and $(D)$, where the primal/dual values are equal and both are attained. Furthermore, assume that the set of feasible solutions of $(P)$ is contained in some compact subset $U\subseteq~\mcs^n$. Setting $\mcp$ to be the set of primal optimal solutions and $d$ the {\em singularity degree} of $(P)$, we have that 
$${\rm dist}(\tilde{X}, \mcp)\le \mathcal{O}(\epsilon^{2^{-d}}),$$
for any primal feasible solution $\tilde{X}$ with 
$p^*-\epsilon \le \la C,\tilde{X}\ra$.

\end{theorem}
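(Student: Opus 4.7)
The plan is to reduce the optimization error bound to the feasibility error bound of Theorem~\ref{thm:errorboundsfeas} by viewing the set of primal optima $\mcp$ as the feasible region of an auxiliary feasibility SDP obtained by appending the objective-value constraint. Since $p^*$ is attained we may write
$$\mcp = \mcs^n_+ \cap \mathcal{A}', \qquad \mathcal{A}' := \{X\in\mcs^n : \la A_i,X\ra = b_i \ (i\in[m]),\ \la C,X\ra = p^*\}.$$
Because the primal feasible region (hence a fortiori $\mcp$) is contained in the compact set $U$, Theorem~\ref{thm:errorboundsfeas} applies to this augmented affine system and yields a constant $c>0$ such that, for every $\tilde X\in U$,
$${\rm dist}(\tilde X,\mcp) \le c\left({\rm dist}^{2^{-d'}}(\tilde X,\mathcal{A}') + {\rm dist}^{2^{-d'}}(\tilde X,\mcs^n_+)\right),$$
where $d'$ is the singularity degree of the augmented feasibility system defining $\mcp$.

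Next I would evaluate both terms on the right for a primal feasible $\tilde X$ with $\la C,\tilde X\ra \ge p^* - \epsilon$. Primal feasibility gives $\tilde X\in\mcs^n_+$, so the second term vanishes. Primal feasibility also gives $\la A_i,\tilde X\ra=b_i$ for all $i$, so the distance from $\tilde X$ to $\mathcal{A}'$ collapses to the distance from $\tilde X$ to the single hyperplane $\{\la C,X\ra=p^*\}$, which is explicitly $(p^*-\la C,\tilde X\ra)/\|C\|_F \le \epsilon/\|C\|_F$ (the numerator is nonnegative since $p^*$ is the supremum). Combining gives
$${\rm dist}(\tilde X,\mcp) \le c\,(\epsilon/\|C\|_F)^{2^{-d'}} = \mathcal{O}(\epsilon^{2^{-d'}}).$$

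The main obstacle is identifying the exponent $d'$ with the singularity degree $d$ of $(P)$ itself, as the statement requires. For this I would exploit the hypothesis that the dual optimum is attained: let $(y,Z)$ be a dual optimal pair, so complementary slackness (Theorem~\ref{sdpthm}$(i)$) forces $\la X,Z\ra=0$ for every $X\in\mcp$. Consequently $Z$ is an exposing vector for the minimal face of $\mcs^n_+$ containing $\mcp$, and moreover this face is contained in the face of $\mcs^n_+$ obtained by the facial-reduction sequence certifying the singularity degree $d$ of $(P)$. A careful check shows that incorporating the new hyperplane $\la C,X\ra=p^*$ does not lengthen this sequence, so $d'\le d$ and we obtain the claimed bound ${\rm dist}(\tilde X,\mcp)=\mathcal{O}(\epsilon^{2^{-d}})$. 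This last reduction is the technically delicate step and follows the line of argument of Sturm~\cite{sturm} in the setting of primal and dual attainment.
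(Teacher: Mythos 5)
Up to its final step, your argument is the paper's argument in different clothing. The paper also reduces to Theorem~\ref{thm:errorboundsfeas} applied to the optimality system, writing $\mcp=\mathcal{A}\cap(Z^*)^\perp\cap\mcs^n_+$ for a dual optimal slack $Z^*=\sum_i y_i^*A_i-C$; since $p^*-\la C,X\ra=\la Z^*,X\ra$ for every $X\in\mathcal{A}$, this is exactly your augmented affine set $\mathcal{A}'$ intersected with $\mcs^n_+$. Both proofs then annihilate the $\mcs^n_+$-distance term by primal feasibility and bound the affine term by $\mathcal{O}(\epsilon)$. One small inaccuracy on your side: ${\rm dist}(\tilde X,\mathcal{A}')$ does not literally ``collapse to'' the distance to the hyperplane $H=\{\la C,X\ra=p^*\}$, because the nearest point of $H$ to $\tilde X$ need not lie in $\mathcal{A}$; what you need is a Hoffman-type bound ${\rm dist}(\tilde X,\mathcal{A}\cap H)\le c'\,{\rm dist}(\tilde X,H)$ for $\tilde X\in\mathcal{A}\cap U$, which is precisely what the paper invokes via \cite{geometrybound}. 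That only costs a constant and is harmless.

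The genuine gap is the step you yourself flag as delicate: the claim $d'\le d$ is false in general, and no facial-reduction argument will rescue it. Whenever the dual optimal slack $Z^*\succeq 0$ is nonzero, no positive definite matrix can satisfy $\la X,Z^*\ra=0$, so the optimality system $\mathcal{A}'\cap\mcs^n_+$ is never strictly feasible and $d'\ge 1$ even when $(P)$ satisfies Slater and $d=0$. Concretely, for $n=2$ maximize $-X_{11}$ subject to $\tr(X)=1$, $X\succeq 0$: the primal is strictly feasible ($d=0$), primal and dual optima equal $0$ and are attained, the optimal set is the single point $e_2e_2^\top$, yet the feasible point with $X_{11}=\epsilon$, $X_{12}=\sqrt{\epsilon(1-\epsilon)}$, $X_{22}=1-\epsilon$ is $\epsilon$-optimal and lies at Frobenius distance $\sqrt{2\epsilon}$ from the optimal set, so no $\mathcal{O}(\epsilon)$ bound can hold. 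The correct exponent is $2^{-d'}$ with $d'$ the singularity degree of the optimality system $\mathcal{A}'\cap\mcs^n_+$ (this is how Sturm's error bound is actually formulated), not that of $(P)$. To be fair, the paper's own proof makes the same silent substitution of $d$ for $d'$ at exactly this point, so you have reproduced its reasoning faithfully; but the repair you propose goes in the wrong direction, and the discrepancy matters downstream, where strict feasibility of the primal alone is used to claim a linear, rather than square-root, error bound.
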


\begin{proof}
Let $Z^*=\sum_{i=1}^my^*_iA_i-C$ be a dual optimal solution. By Theorem \ref{sdpthm}, a matrix $X\in \mcs^n$ is primal optimal if and only if the following SDP is~feasible:
\be\label{optconditions}
 X\in \mcs^n_+,\ \la A_i,X\ra =b_i\ (i\in [m]),\ \la X,Z^*\ra=0.
\ee
In turn \eqref{optconditions} implies that $${\rm dist}(\tilde{X}, \mcp)={\rm dist}(\tilde{X}, \mathcal{A}\cap (Z^*)^\perp \cap\mcs^n_+).$$
 Thus, by Theorem \ref{thm:errorboundsfeas}, there exists a constant $c$ such that 
 $${\rm dist}(\tilde{X}, \mcp)\le c\left({\rm dist}^{2^{-d}}(X,\mathcal{A} \cap (Z^*)^\perp)+{\rm dist}^{2^{-d}}(X,\mcs^n_+)\right),$$
 for all $\tilde{X}\in U$. We proceed to bound the terms in the summand separately. Since $\tilde{X}\in \mcs^n_+$ we have that ${\rm dist}(\tilde{X},\mcs^n_+)=~0$. Furthermore, it follows by \cite{geometrybound} that there exists a constant $c'=c'(U)$ such that ${\rm dist}(\tilde{X},\mathcal{A} \cap (Z^*)^\perp)\le c'{\rm dist}(\tilde{X},(Z^*)^\perp).$ 
 It remains to upper bound $ {\rm dist}(\tilde{X},(Z^*)^\perp).$ The assumption $p^*-\epsilon \le \la C,\tilde{X}\ra$ implies that $\la Z^*,\tilde{X}\ra\le \epsilon$, since 
 $$p^*-\la C,\tilde{X}\ra=p^*-\la \sum_{i=1}^my^*_iA_i-Z^*,\tilde{X}\ra=
 %\sum_{i=1}^my^*_ib_i-p^*=
 \la Z^*,\tilde{X}\ra.$$
 Lastly, the distance of $\tilde{X}$ to $(Z^*)^\perp $ is equal to the length of the projection of $\tilde{X}$ onto ${\rm span}(Z^*)$, i.e.,
 $${\rm dist}(\tilde{X}, (Z^*)^\perp)={\la Z^*,\tilde{X}\ra\over \|Z^*\|}\le {\epsilon \over \|Z^*\|} .$$ 
\end{proof}
 %Note that Theorem \ref{thm:errorbounds} gives an upper for the distance of an arbitrary matrix to the set of feasible solutions of a SDP. Nevertheless, as optimality of an arbitrary SDP can be captured as a SDP feasibility problem (under very mild assumptions that are satisfied in our setting), an analogous statement also holds for nonfeasibility SDPs.
 
\section{Proof of the Main Theorem}\label{mainresultproof}

In this section we prove the Main Theorem. For the readers' convenience, we give  the proof in a series of~steps. 

\medskip 
\noindent {\bf Step 1a:}  The primal SDP \eqref{theta:primalmain} in the Main Text (MT) has a unique optimal solution, which we denote $X^*$ \cite{MT}.

\medskip
Indeed, by assumption  \eqref{algcon} from MT \cite{MT},  Theorem \ref{uniqueness} implies that  the primal SDP \eqref{theta:primalmain} (see MT \cite{MT}) has a unique optimal solution.

\medskip 
\noindent {\bf Step 1b:} The non-contextuality inequality is  a  {self-test}. %using the Lemma~\ref{STgramvec}.

\medskip
This follows from  the following Lemma:
\begin{lemma}\label{STgramvec}
Let $X^*$ be the unique optimal solution for the primal SDP $\eqref{theta:primalmain}$ (see MT \cite{MT}), and let $\{\ket{u_i}\bra{u_i}\}_{i=0}^n$ be a quantum realization achieving the maximum quantum value of $\sum_{i=1}^{n} w_ip_i  : p \in \mathcal{P}_{q}(\gexcl)$. Then, the noncontextuality inequality $\sum_{i=1}^{n} w_ip_i \le B_{nc}(\gexcl, w),$ for all $p \in \mathcal{P}_{nc}(\gexcl)$  is a \emph{self-test} for the realization $\{\ket{u_i}\bra{u_i}\}_{i=0}^n$.
\end{lemma}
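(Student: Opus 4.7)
The strategy is to leverage the correspondence, recalled just after \eqref{theta:primalmain}, between optimal quantum realizations of noncontextuality inequalities and optimal Gram decompositions of the SDP \eqref{theta:primalmain}. Combined with the uniqueness of $X^*$ established in Step 1a, this reduces self-testing to a standard fact about tuples of vectors with identical Gram matrices.

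Concretely, let $\{\ket{u'_i}\bra{u'_i}\}_{i=0}^n$ be any other realization attaining the quantum supremum $B_{qc}(\gexcl,w)$. Set $v_0=\ket{u_0}$ and $v_i=\braket{u_0}{u_i}\ket{u_i}$ for $i\ge 1$, and analogously $v'_0,v'_i$ for the primed realization. The correspondence immediately implies that $\mathrm{Gram}(v_0,\dots,v_n)$ and $\mathrm{Gram}(v'_0,\dots,v'_n)$ are both feasible for the primal SDP and both attain the optimal value $\vartheta(\gexcl,w)=B_{qc}(\gexcl,w)$. By Step 1a they therefore coincide with $X^*$, hence with each other.

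From the elementary fact that two tuples of vectors with the same Gram matrix are related by a linear isometry of their spans, there is an isometry $V$, extended arbitrarily to the full ambient space, satisfying $V v_i = v'_i$ for every $i$. Using the Main Theorem's non-vanishing hypothesis $\braket{u_0}{u_i}\ne 0$, this rearranges to $V\ket{u_i}=\alpha_i\ket{u'_i}$ with $\alpha_i=\braket{u'_0}{u'_i}/\braket{u_0}{u_i}$. Since $|\braket{u_0}{u_i}|^2 = X^*_{ii} = |\braket{u'_0}{u'_i}|^2$, the scalar $\alpha_i$ has unit modulus, and passing to rank-one projectors absorbs this phase to yield
\[
V\ket{u_i}\bra{u_i}V^{\dagger} = \ket{u'_i}\bra{u'_i}, \qquad 0\le i\le n,
\]
which is precisely the self-testing relation \eqref{eq:st}.

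The only real subtlety, and the reason why the non-degeneracy hypothesis on $\braket{u_0}{u_i}$ is built into the Main Theorem, is that if some overlap were zero then $v_i$ would vanish inside the Gram matrix, destroying all information about the direction of $\ket{u_i}$ and obstructing the reconstruction of $V$ on that sector. Under the stated hypotheses this obstacle disappears, and the lemma follows without further work.
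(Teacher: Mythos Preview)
Your proof is correct and follows essentially the same route as the paper: both arguments pass from optimal realizations to Gram matrices of the scaled vectors $\braket{u_0}{u_i}\ket{u_i}$, invoke uniqueness of $X^*$ to force these Gram matrices to coincide, extract an isometry $V$ relating the two tuples, and then use $|\braket{u_0}{u_i}|=|\braket{u'_0}{u'_i}|$ (from the diagonal of $X^*$) to cancel the scalar and conclude at the level of rank-one projectors. If anything, you are more explicit than the paper about the unit-modulus phase $\alpha_i$ and about why the non-vanishing hypothesis $\braket{u_0}{u_i}\ne 0$ is indispensable.
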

\begin{proof}
By the equivalence between the optimization problems \eqref{cdfvdfb} (see MT \cite{MT}) and \eqref{theta:primalmain} (see MT \cite{MT}), as discussed in the main text, we have that 
 \be\label{SMequality}
 \begin{aligned}
 {{X^\star}}=&{\rm Gram}(\ket{u_0},\braket{u_0}{u_1}\ket{u_1},\ldots,\braket{u_0}{u_n}\ket{u_n})\\
=& {\rm Gram}(\ket{u_0'}, \braket{u_0'}{u_1'}\ket{u_1'},\ldots,\braket{u_0'}{u_n'}\ket{u_n'}), 
\end{aligned}
\ee
where $ \{\ket{u_i'}\bra{u_i'}\}_{i=0}^n$ is an arbitrary realization that also achieves the quantum value. As a consequence, there exists an isometry $V$ such that 
 \beq
 \label{isometry}
 V\ket{u_0} = \ket{u_0'}\ \text{ and }\ V\braket{u_0}{u_i}\ket{u_i}=\braket{u_0'}{u_i'}\ket{u_i'}\ (i\in [n]).
 \enq
 By equating the diagonal entries in \eqref{SMequality} we get that
\be|\braket{u_0}{u_i}|=|\braket{u_0'}{u_i'}|,
\ee
Thus for all $0 \leq i \leq n$,  we have that 
$$V\ket{u_i}\bra{u_i}V^{\dagger}=\ket{u_i'}\bra{u_i'},$$
which concludes  the proof.
\end{proof} 

\medskip 
\noindent {\bf Step 2:}  For any feasible solution $\tilde{X}$ satisfying 
$$ \sum_{i=1}^n w_i\tilde{X}_{ii}\ge B_{qc}(\gexcl, w)-\epsilon,$$ we have that 
\be \|\tilde{X}-X^*\|_F={\rm dist}(\tilde{X},X^*)\le \mathcal{O}(\epsilon). \ee

\medskip
We  prove this as a consequence of   Theorem \ref{thm:errorbounds}. For this,  we first show that  the set of feasible solutions of~\eqref{theta:primalmain} (see MT \cite{MT}) is bounded. Indeed, if $X$ is an arbitrary feasible solution, for any $i\in [n]$ we have that 
\be
\begin{pmatrix}1 & X_{0i} \\X_{0i} & X_{ii} \end{pmatrix}=\begin{pmatrix}1 & X_{ii} \\X_{ii} & X_{ii} \end{pmatrix}\succeq 0,
\ee
which in turn implies that $0\le X_{ii}\le 1$. Furthermore, for any $i\ne j\in [n]$, we have that 
\be\begin{pmatrix}X_{ii} & X_{ij} \\X_{ij} & X_{jj} \end{pmatrix}\succeq0,\ee
which implies that 
\be|X_{ij}|\le X_{ii}X_{jj}\le 1.
\ee
Next, we show that the SDP \eqref{theta:primalmain} (see MT \cite{MT}) is strictly feasible, (i.e., there exists a positive definite feasible solution), and consequently its singularity degree is equal to zero. Define $F_n = \frac{1}{m}I_{n+1} + \sum_{i=1}^n \frac{2}{m} E_{0i} +(1-\frac{1}{m}) E_{00} $,~i.e.,
\be
F_n =\left(\begin{array}{c|ccccc}
1 & \frac{1}{m} & \frac{1}{m} & \cdots & \cdots & \frac{1}{m} \\
\addlinespace[-10 pt] \\
\hline\\
\addlinespace[- 8pt]
\frac{1}{m} & \frac{1}{m} &0 & \cdots &\cdots & 0 \\
\frac{1}{m} &0 & \frac{1}{m} & \ddots & &\vdots \\
\vdots &\vdots & \ddots & \ddots & \ddots &\vdots \\
\vdots &\vdots & & \ddots & \ddots & 0 \\
\frac{1}{m} &0 & \cdots & \cdots & 0 & \frac{1}{m} \\
\end{array} \right),
\ee 
where $m > n$. 
It is easy to check that  $F_n$ satisfies all the feasibilty constraints and has full-rank.

 \medskip 
\noindent {\bf Step 3a:} 
For any quantum realization   $\{\ket{u'_i}\bra{u'_i}\}_{i=0}^n$  satisfying 
\be\label{csdfvrgbt}
\sum_{i=1}^n|\braket{u_i'}{u'_0}|^2\ge B_{qc}(\gexcl, w)-\epsilon,
\ee there exists a unitary matrix $U$ such that 
\be \|U\ket{u_i} - \ket{u'_i}\|_2 \leq \mathcal{O} (\sqrt{\epsilon}),\quad  0\le i\le n. \ee 

 This will follow from the following general fact concerning positive semidefinite matrices: If two positive semidefinite matrices are $\epsilon$-close in Frobenius distance, then their Gram decompositions are $\mathcal{O}\left(\sqrt{\epsilon}\right)$ close. Formally:

%Specifically, in Lemma~\ref{equivalentapprox} below, we  show that there exists an unitary matrix $U: \C^{(n+1)\times (n+1)}$ such that 
%\be \|U\ket{w_i} - \ket{w'_i}\|_2 \leq \mathcal{O}(\sqrt{\eps}),\ \text{ for all } i \in \{0,\cdots,n\},\ee where $\ket{w_0},\ldots,\ket{w_n}\in \C^{n+1}$ and $ \ket{w'_0},\ldots,\ket{w'_n}\in \C^{n+1}$ are Gram decompositions of $X^*$ and $\tilde{X}$, respectively. 

\begin{lemma}\label{equivalentapprox} 
Consider two positive semidefinite matrices $X,X'\in \mcs^n_+$ with $\|X-X'\|_F\le \epsilon$. For any two Gram decompositions $\ket{u_1},\ldots,\ket{u_n}\in \C^{n}$ and $ \ket{u'_1},\ldots,\ket{u'_n}\in \C^{n}$ of $X$ and $X'$, respectively, there exists a unitary matrix $U\in \C^{n\times n}$ such that 
\be
\|U\ket{u_i} - \ket{u'_i}\|_2 \leq \sqrt{n\eps},\ \text{ for all } i \in \{1,\ldots,n\}.
\ee
\end{lemma}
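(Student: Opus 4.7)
The plan is to use the uniqueness of Gram decompositions up to a left unitary, reducing the problem to comparing the \emph{canonical} square-root Gram representations, and then to invoke the Powers--St\o rmer inequality to convert Frobenius-norm closeness of $X,X'$ into Frobenius-norm closeness of $\sqrt{X},\sqrt{X'}$.

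Concretely, I would let $W,W' \in \mathbb{C}^{n\times n}$ be the matrices whose $i$-th columns are $\ket{u_i}$ and $\ket{u_i'}$ respectively, so that $W^{\dagger}W=X$ and $(W')^{\dagger}W'=X'$. The polar decomposition then gives unitaries $U_1,U_2$ with $W=U_1\sqrt{X}$ and $W'=U_2\sqrt{X'}$, where $\sqrt{\,\cdot\,}$ denotes the unique positive semidefinite square root. Setting $U := U_2 U_1^{\dagger}$, a direct computation yields $UW - W' = U_2(\sqrt{X}-\sqrt{X'})$, so $\|UW-W'\|_F = \|\sqrt{X}-\sqrt{X'}\|_F$. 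Since the Euclidean norm of any column of a matrix is bounded by its Frobenius norm, this already gives
\[
\|U\ket{u_i}-\ket{u_i'}\|_2 \;\leq\; \|\sqrt{X}-\sqrt{X'}\|_F \qquad \text{for every } i.
\]

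It then remains to bound the right-hand side by $\sqrt{n\epsilon}$. I would invoke the Powers--St\o rmer inequality $\|\sqrt{X}-\sqrt{X'}\|_F^2 \leq \|X-X'\|_1$, combined with the Cauchy--Schwarz bound on the singular values of $X-X'$, namely $\|X-X'\|_1 \leq \sqrt{n}\,\|X-X'\|_F \leq \sqrt{n}\,\epsilon$. Chaining the two yields $\|\sqrt{X}-\sqrt{X'}\|_F \leq n^{1/4}\sqrt{\epsilon} \leq \sqrt{n\epsilon}$, which closes the argument.

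The main obstacle is the Powers--St\o rmer step, since everything else is essentially bookkeeping: the polar-decomposition reduction is standard, and the column-versus-Frobenius-norm bound is elementary. Powers--St\o rmer itself is classical but not entirely trivial; the standard proof goes through operator-monotonicity of $\sqrt{\,\cdot\,}$ on the positive cone. Any weakening that still delivers a $\sqrt{\epsilon}$-rate Hölder bound $\|\sqrt{X}-\sqrt{X'}\|_F = \mathcal{O}(\sqrt{\epsilon})$ would be enough for the downstream Step~3a, where only the scaling $\|U\ket{u_i}-\ket{u_i'}\|_2 = \mathcal{O}(\sqrt{\epsilon})$ is ultimately needed.
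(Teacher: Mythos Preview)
Your argument is correct and follows essentially the same architecture as the paper's proof: reduce via polar decomposition to the canonical square-root representatives $\sqrt{X},\sqrt{X'}$, then control $\|\sqrt{X}-\sqrt{X'}\|_F$. The only point of divergence is the square-root perturbation step. The paper does not use Powers--St\o rmer; instead it invokes the operator-norm inequality $\|\sqrt{A}-\sqrt{B}\|_\infty \le \sqrt{\|A-B\|_\infty}$ (Bhatia, Eq.~X.2) sandwiched between the standard bounds $\|\cdot\|_\infty \le \|\cdot\|_F \le \sqrt{n}\,\|\cdot\|_\infty$, which lands exactly on $\sqrt{n\epsilon}$. Your Powers--St\o rmer route is equally valid and in fact yields the sharper constant $n^{1/4}\sqrt{\epsilon}$, so the step you flagged as the ``main obstacle'' is, if anything, an improvement over the paper's bound rather than a gap.
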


\begin{proof} 

%We make use of the following facts in the proof: 
%\begin{fact}\cite[Equation X.2, p.~290]{bhatia1997matrix}
%\label{continuity}
%Let $A,B \succeq 0.$ Then, \be  \|\sqrt{A}-\sqrt{B}\|_{\infty} \leq \sqrt{\|A-B\|_\infty}. \ee
%\end{fact}
%\begin{fact}\cite[Page 7]{bhatia1997matrix}
%\label{equivalence}
%Let $A \in \mathbb{C}^{n\times n}.$ Then,
%\be
%\|A\|_\infty \leq \|A\|_F \leq \sqrt{n}\|A\|_\infty.
%\ee
%\end{fact}

Note that $X=VV^\dagger$ and $X'=V'(V')^\dagger$, where $V\in~\C^{n\times n}$ has the vectors $\ket{u_1},\ldots,\ket{u_n}$ as its rows, and $V'\in \C^{n\times n}$ has the vectors $\ket{u'_1},\cdots,\ket{u'_n}$ as its rows. 

By the polar decomposition (\cite[Theorem~7.3.1]{HJ}) there exist unitary matrices $U_V\in \C^{n\times n}, U_{V'}\in \C^{n\times n}$ with 
\be\label{polar}
V= |V|U_V \ \text{ and }\ V'= |V'|U_{V'},
\ee
where we use the notation $|A|=(AA^\dagger)^{1/2}$. 

Furthermore, as $X=VV^\dagger$ and $X'=V'(V')^\dagger$ we have 
\be\label{sqroot}
\sqrt{X}=(VV^\dagger)^{1/2}=|V| \ \text{ and }\sqrt{X'}=(V'(V')^\dagger)^{1/2}=|V'|.
\ee
Combining \eqref{polar} and \eqref{sqroot}, we get 
\be\label{cvevbr}
V= \sqrt{X}U_V \ \text{ and } \ V'= \sqrt{X'}U_{V'}.
\ee
Using \eqref{cvevbr}, we get that 
\begin{align}
\|VU_V^\dagger U_{V'}-V'\|_F &=\|(\sqrt{X}-\sqrt{X'})U_{V'}\|_F \nonumber \\
&\leq \|\sqrt{X}-\sqrt{X'}\|_F\|U_{V'}\|_\infty. \enspace
\end{align} 

Since $U_{V'}$ is unitary, we have that $\|U_{V'}\|_\infty=1$. Lastly, the assumption $\|X-X'\|_F\le \epsilon$ implies that
\begin{align}
\|\sqrt{X}-\sqrt{X'}\|_F & \overset{a}\leq \sqrt{n}\|\sqrt{X} - \sqrt{X'}\|_{\infty} \overset{b}\leq \sqrt{n}\sqrt{\|X-X'\|_{\infty}} \nonumber \\
&\overset{c}\leq \sqrt{n}\sqrt{\|X-X'\|_{F}} \overset{d}\leq \sqrt{n\epsilon}, \enspace
\end{align}
where inequalities $(a)$ and $(c)$  follow from  the well-known relation $\|A\|_\infty \leq \|A\|_F \leq \sqrt{n}\|A\|_\infty$ (see \cite[Page 7]{bhatia1997matrix}),  inequality $(b)$ follows from   $\|\sqrt{A}-\sqrt{B}\|_{\infty} \leq \sqrt{\|A-B\|_\infty}$ for $ A,B \succeq 0$ (see \cite[Equation X.2, p.~290]{bhatia1997matrix}), and $(d)$ follows from the assumption. 
 
Summarizing, we have shown that 
\be \|VU-V'\|_F\le \sqrt{n\epsilon},
\ee
for the unitary matrix $U=U^\dagger_{V'}U_V\in \C^{n\times n}$. 
\end{proof} 
Now, we can conclude the proof of Step 4. Set  $$\ket{w_0} = \ket{u_0}, \quad \ket{w_i} = \braket{u_0}{u_i}\ket{u_i},$$ 
$$\ket{w'_0} = \ket{u'_0}, \quad \ket{w'_i} = \braket{u'_0}{u'_i}\ket{u'_i},$$
and let $X$ and $\tilde{X}$ be the Gram matrices of $\{\ket{w_i}\}_{i=0}^n$ and $\{\ket{w'_i}\}_{i=0}^n$ respectively. By \eqref{csdfvrgbt} we have that 
$$ \sum_{i=1}^n w_i\tilde{X}_{ii}\ge B_{qc}(\gexcl, w)-\epsilon,$$ and thus, it follows  by Step 3  that 
\be\label{vdfbfbnt}
 \|\tilde{X}-X^*\|_F\le \mathcal{O}(\epsilon). \ee
In turn, \eqref{vdfbfbnt}  combined with  Lemma \ref{equivalentapprox} imply that  there exists a unitary matrix $U\in \C^{n\times n}$ such that 
\be
\|U\ket{w_i} - \ket{w'_i}\|_2 \leq \sqrt{n\eps},\ \text{ for all } i \in \{1,\ldots,n\}.
\ee
Lastly,  noting that 
\be
 \|\ket{w_i}\|_2=|\braket{u_0}{u_i}| = \min_{1\le i\le n} |\braket{u_0}{u_i}|>0,%\sqrt{\frac{\vartheta_n}{n}},
\ee where the strict positivity follows from the assumption    $\braket{u_0}{u_i}\ne 0, $ for all $1\le i\le n$. Thus, we have that 
\be \|U\ket{u_i} - \ket{u'_i}\|_2 \leq \mathcal{O} (\sqrt{\epsilon}),\quad  0\le i\le n, \ee
as a consequence of the following simple lemma:

\begin{lemma} \label{unitvecdist}
Let $\|\ket{a} - \ket{b} \|_2 \leq \delta$ for vectors $\ket{a}$ and $\ket{b}$, such that $\| \ket{a}\|_2 \geq 2\delta$. Denote by $\ket{\hat{a}}$ and $\ket{\hat{b}}$ to  be $\ket{a}$ and $\ket{b}$, normalised to have unit norm. 
%the unit vectors of $\ket{a}$ and $\ket{b}$, respectively. 
Then, we have that $\| \ket{\hat{a}}- \ket{\hat{b}} \|_2 \leq \frac{2\delta}{\|\ket{a}\|_2}$.
\end{lemma}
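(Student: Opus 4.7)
The plan is to use an add-and-subtract trick by inserting the intermediate vector $\ket{b}/\|\ket{a}\|_2$ into the difference $\ket{\hat a}-\ket{\hat b}$. Concretely, I would write
\[
\ket{\hat a}-\ket{\hat b}=\frac{\ket{a}-\ket{b}}{\|\ket{a}\|_2}+\ket{b}\left(\frac{1}{\|\ket{a}\|_2}-\frac{1}{\|\ket{b}\|_2}\right),
\]
and apply the triangle inequality to the two summands separately. The first summand has norm at most $\|\ket{a}-\ket{b}\|_2/\|\ket{a}\|_2\le \delta/\|\ket{a}\|_2$ directly from the hypothesis. The second summand has norm equal to $\bigl|\|\ket{b}\|_2-\|\ket{a}\|_2\bigr|/\|\ket{a}\|_2$ after pulling the scalar out and combining fractions, and this is at most $\|\ket{a}-\ket{b}\|_2/\|\ket{a}\|_2\le\delta/\|\ket{a}\|_2$ by the reverse triangle inequality. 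Adding the two estimates yields exactly the bound $2\delta/\|\ket{a}\|_2$ claimed in the statement.

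Before invoking this identity I would first verify that the normalized vector $\ket{\hat b}$ is well defined, which is where the hypothesis $\|\ket{a}\|_2\ge 2\delta$ enters: by the reverse triangle inequality $\|\ket{b}\|_2\ge \|\ket{a}\|_2-\delta\ge \delta>0$. Notably, the precise constant in this hypothesis does not propagate into the final bound; only positivity of $\|\ket{b}\|_2$ is needed for the argument.

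The main (and only nontrivial) step is choosing the right intermediate vector in the add-and-subtract decomposition. I do not expect any genuine obstacle, as the bound is essentially a first-order perturbation estimate for the smooth map $\ket{v}\mapsto\ket{v}/\|\ket{v}\|_2$, away from the origin; the two contributions correspond respectively to the perturbation of the numerator and of the normalising denominator.
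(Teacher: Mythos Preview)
Your argument is correct. It is, however, a genuinely different route from the paper's. The paper first establishes the sharper intermediate bound $\|\ket{\hat a}-\ket{\hat b}\|\le \delta/\min\{\|a\|,\|b\|\}$ via a case split on which of $\|a\|,\|b\|$ is larger (comparing $\|\hat a-\hat b\|^2$ with $\|\hat a-\tfrac{\|b\|}{\|a\|}\hat b\|^2$), then converts this to $2\delta/\|a\|$ using $\min\{\|a\|,\|b\|\}\ge \|a\|-\delta$ together with the elementary inequality $1/(1-x)\le 1+2x$ for $0\le x\le 1/2$; this last step is precisely where the hypothesis $\|a\|\ge 2\delta$ is consumed. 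Your add-and-subtract decomposition bypasses both the case analysis and the auxiliary inequality and lands directly on $2\delta/\|a\|$, with the hypothesis only needed (and only in the weaker form $\|a\|>\delta$) to make $\ket{\hat b}$ well defined---as you observed. The trade-off is that the paper's route exposes the tighter bound $\delta/\min\{\|a\|,\|b\|\}$ along the way, which could be useful if one had independent control on $\|b\|$; your route is shorter and shows the stated conclusion holds under a weaker assumption.
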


\begin{proof}To ease notation  we use  vector notations instead of the bra-ket notation and  set $\|\cdot\|=\|\cdot\|_2$ throughout this proof. 
Suppose $\|a\| \leq \|b\|$. By assumption, we have  
\be \left\| \|a\|\hat{a} - \|b\|\hat{b} \right\| \leq \delta. \ee
Now observe that $\| \hat{a} -\hat{b} \|^2 = 2-2|\hat{a}^{\top}\hat{b}|$, whereas 
\be \| \hat{a} - \frac{\|b\|}{\|a\|}\hat{b} \|^2 = 1 + \left(\frac{\|b\|}{\|a\|}\right)^2 - 2\frac{\|b\|}{\|a\|}|\hat{a}^{\top}\hat{b}|.\ee From the assumption that $\|a\| \leq \|b\|$, we have $\| \hat{a} - \frac{\|b\|}{\|a\|}\hat{b} \|^2 - \| \hat{a} - \hat{b} \|^2 \geq 0$, and hence $\| \hat{a} - \hat{b} \| \leq \frac{\delta}{\|a\|}$. The same arguments hold when $\|a\| \geq \|b\|$, by switching $a$ and $b$. Hence, in general, we have $\| \hat{a}- \hat{b} \|\leq \frac{\delta}{\min \{\|a\|,\|b\| \}}$.

Note that $\|a\| - \|b\| \leq \|a - b \| \leq \delta$, implies that $\|b\| \geq \|a\| - \delta$ and therefore $\min \{\|a\|,\|b\| \} \geq \|a\| - \delta$. So, finally, we have 
\be \| \hat{a}- \hat{b} \| \leq \frac{\delta}{\|a\| - \delta} \leq \frac{\delta}{\|a\|}\left(1+ \frac{2\delta}{\|a\|}\right) \leq \frac{2\delta}{\|a\|},\ee where for the second inequality we use the fact that $\frac{1}{1-x} \leq 1+ 2x,$ for $0 \leq x \leq \frac{1}{2}$.
\end{proof}

 \medskip 
\noindent {\bf Step 3b:} We have that 
 \be \|U\ket{{u_i}}\bra{{u_i}}U^{\dagger} - \ket{{u'_i}}\bra{{u'_i}}\| \leq \mathcal{O}\left({\sqrt{\epsilon}}\right), \quad  0\le i\le n.\ee

\medskip

\begin{lemma}\label{vec_proj}
For two unit vectors $\ket{x}$ and $\ket{y}$, if $\|\ket{x}-\ket{y}\|_2 < \epsilon$, then $\| \ket{x}\bra{x} -\ket{y}\bra{y}\|_F < \sqrt{2} \epsilon$. 
\end{lemma}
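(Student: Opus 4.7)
The plan is to compute both norms in closed form in terms of the inner product $\braket{x}{y}$ and then to verify the desired bound by an elementary algebraic inequality. First I would expand the Frobenius norm squared via the trace formula:
\[
\|\ket{x}\bra{x} - \ket{y}\bra{y}\|_F^2 = \mathrm{tr}\bigl((\ket{x}\bra{x} - \ket{y}\bra{y})^2\bigr).
\]
Using $\braket{x}{x} = \braket{y}{y} = 1$ together with $\mathrm{tr}(\ket{x}\bra{x}\ket{y}\bra{y}) = |\braket{x}{y}|^2$, the right-hand side collapses to $2\bigl(1 - |\braket{x}{y}|^2\bigr)$. Similarly, $\|\ket{x} - \ket{y}\|_2^2 = 2 - 2\,\mathrm{Re}\,\braket{x}{y}$.

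Next, writing $\braket{x}{y} = a + i b$ with $a,b \in \mathbb{R}$ and $a^2 + b^2 \le 1$, the desired inequality $\|\ket{x}\bra{x} - \ket{y}\bra{y}\|_F^2 \le 2\,\|\ket{x}-\ket{y}\|_2^2$ reduces to $1 - a^2 - b^2 \le 2(1 - a)$, which rearranges to $(1 - a)^2 + b^2 \ge 0$ and is therefore trivially true. Combining this with the hypothesis $\|\ket{x} - \ket{y}\|_2 < \epsilon$ yields $\|\ket{x}\bra{x} - \ket{y}\bra{y}\|_F < \sqrt{2}\,\epsilon$, as claimed.

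There is no substantive obstacle: the argument is a direct calculation. The only point worth flagging is that the naive triangle-inequality decomposition $\ket{x}\bra{x} - \ket{y}\bra{y} = \ket{x}(\bra{x}-\bra{y}) + (\ket{x}-\ket{y})\bra{y}$ only gives a factor of $2$ rather than the sharper $\sqrt{2}$; obtaining $\sqrt{2}$ requires exploiting the cancellation that occurs because $\ket{x}\bra{x} - \ket{y}\bra{y}$ is traceless and of rank at most $2$, which is exactly what the closed-form expression $2(1 - |\braket{x}{y}|^2)$ captures.
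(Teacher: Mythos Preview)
Your proposal is correct and follows essentially the same approach as the paper: both compute the closed forms $\|\ket{x}\bra{x}-\ket{y}\bra{y}\|_F^2 = 2(1-|\braket{x}{y}|^2)$ and $\|\ket{x}-\ket{y}\|_2^2 = 2-2\,\mathrm{Re}\,\braket{x}{y}$ and then finish with an elementary algebraic inequality. Your treatment is slightly more general in that you handle the complex case explicitly via $\braket{x}{y}=a+ib$, whereas the paper works in real coordinates; and your algebraic step (reducing to $(1-a)^2+b^2\ge 0$) is a touch cleaner than the paper's chain $2-2(1-\epsilon^2/2)^2<2\epsilon^2$, but the substance is identical.
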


\begin{proof}
Since $\|\ket{x}-\ket{y}\|_2 < \epsilon$, we have 
\begin{align*}
\epsilon^2 &> \|\ket{x}-\ket{y}\|_2^2 = \sum_i (x_i -y_i)^2 = \sum_i x_i^2 + y_i^2 - 2x_iy_i \\ &= 2-2 \sum_i x_iy_i,
\end{align*}
implying, 
\begin{equation}\label{vecprojdist}
\sum_i x_iy_i > 1- \frac{\epsilon^2}{2}.
\end{equation}

Now we bound the square  distance between the corresponding projectors. 

\begin{align*}
\| xx^{T} - yy^{T}\|_F^2 &= \sum_{i,j} (x_ix_j - y_iy_j)^2  \\
&= \sum_{i,j} x_i^2x_j^2 + y_i^2y_j^2 - 2x_ix_jy_iy_j \\
&= 2 - 2 \left(\sum_i x_iy_i \right)^2 \\
& < 2-2\left(1- \frac{\epsilon^2}{2} \right)^2 < 2\epsilon^2, 
\end{align*}
where the second last inequality follows from $\eqref{vecprojdist}$. Taking square roots on both sides, finishes the proof. 
\end{proof}

 \section{Proof of Uniqueness for KCBS}\label{oddcycleunique}

In this section we prove Corollary~\ref{maincorollary} (see MT \cite{MT}). The first observation is that the given quantum realization $\{\ket{u_i}\bra{u_i}\}_{i=0}^n$, achieves the maximum value of the $KCBS_n$ inequality. In order to prove that for any odd integer $n$,  the $KCBS_n$ inequality  is an  $(\epsilon,{1\over 2})$-robust self-test for this realization, it suffices to show that the the primal SDP~\eqref{theta:primalcn}, has a unique optimal solution. Equivalently, we show that the dual optimal solution $Z^*$ for \ref{dualcnintro} (see MT~\cite{MT}) corresponding to an odd $n$-cycle graph, satisfies the condition given in \ref{algcon} (see MT \cite{MT}). We prove this in Theorem~\ref{wherethemagichappens}.

\begin{theorem}\label{wherethemagichappens}
For any odd integer $n$, the SDP 
\be\label{theta:primalcn}
\begin{aligned} 
\vartheta(C_n) = \max & \ \sum_{i=1}^n { X}_{ii} \\
{\rm s.t.} & \ { X}_{ii}={ X}_{0i}, \  i\in [n],\\
 & \ { X}_{i(i+1)}=0,\ i\in [n],\\
& \ X_{00}=1,\ X\in \mathcal{S}^{1+n}_+,
\end{aligned}
\ee
satisfies the property that the matrix
\be X^*={\rm Gram}\left(\ket{u_0},\braket{u_0}{u_1}\ket{u_1},\ldots,\braket{u_0}{u_n}\ket{u_n}\right),\ee
is the unique optimal solution of \eqref{theta:primalcn}, where $ \{\ket{u_i}\bra{u_i}\}_{i=0}^n$ is the canonical realization given in the Corollary~\ref{maincorollary} in MT. 
\end{theorem}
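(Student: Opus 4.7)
The plan is to apply Theorem \ref{uniqueness} to the primal SDP \eqref{theta:primalcn} paired with the dual \eqref{dualcnintro} specialized to the odd cycle, using the matrix $Z^*_n$ of \eqref{zmatrix} as the candidate dual optimal. This reduces the theorem to two tasks: (i) establishing that $X^*$ and $Z^*_n$ form a primal-dual optimal pair, and (ii) showing that the homogeneous system \eqref{algcon}, together with the constraint $M_{00} = 0$ coming from the primal equality $X_{00} = 1$, admits only the trivial solution $M = 0$. Step (ii) is the main technical obstacle.

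For (i), I would verify primal feasibility of $X^*$ directly from the KCBS realization \eqref{realization}: orthogonality $\braket{u_i}{u_{i+1}} = 0$ yields $X^*_{i,i+1} = 0$, the Gram construction gives $X^*_{0i} = |\braket{u_0}{u_i}|^2 = X^*_{ii}$, and $X^*_{00} = 1$. The primal value equals $\sum_i |\braket{u_0}{u_i}|^2 = n\cos(\pi/n)/(1+\cos(\pi/n)) = \vartheta(C_n)$ by \eqref{q:valuekcbs}. For dual feasibility of $Z^*_n$, a Schur complement on the lower-right block $I_n + \alpha A_{C_n}$, with $\alpha = (n - \vartheta(C_n))/(2\vartheta(C_n)) = 1/(2\cos(\pi/n))$, reduces $Z^*_n \succeq 0$ to the spectral inequality $\vartheta(C_n)(I_n + \alpha A_{C_n}) \succeq e_n e_n^\top$; this is a direct check using the eigenvalues $\{2\cos(2\pi k/n) : 0 \le k \le n-1\}$ of $A_{C_n}$, with equality attained on the span of $e_n$, accounting for the rank deficit needed for complementary slackness. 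Since $(Z^*_n)_{00} = \vartheta(C_n)$ matches the primal value, both are optimal.

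For (ii), let $M$ be symmetric with $M_{00} = 0$, $M_{0i} = M_{ii} =: m_i$, $M_{i,i+1} = 0$ (indices cyclic), and $MZ^*_n = 0$. Reading row $i \ge 1$ of $MZ^*_n = 0$ at column $k \ge 1$ yields the three-term recurrence
\begin{equation}
-m_i + M_{ik} + \alpha\bigl(M_{i,k-1} + M_{i,k+1}\bigr) = 0,
\end{equation}
with boundary zeros $M_{i,i\pm 1} = 0$ and diagonal value $M_{ii} = m_i$. Solving in closed form using the characteristic roots $e^{\pm i(n-1)\pi/n}$ of the homogeneous part, I obtain
\begin{equation}
M_{ik} = \frac{m_i\,\vartheta(C_n)}{n}\left[1 + \frac{\cos\bigl((k-i)(n-1)\pi/n\bigr)}{\cos(\pi/n)}\right].
\end{equation}
Imposing symmetry $M_{ij} = M_{ji}$ forces $m_i = m_j$ whenever the bracket is nonzero. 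A parity analysis that crucially uses the oddness of $n$ (via $\gcd((n-1)/2, n) = 1$) shows the bracket vanishes precisely when $|i-j| \equiv 1 \pmod n$; hence $m_i = m_j$ for every non-adjacent pair in $C_n$, and since the complement of $C_n$ is connected for $n \ge 5$, transitivity gives $m_1 = \cdots = m_n =: m$. Finally, reading $(MZ^*_n)_{00} = 0$ gives $\vartheta(C_n) M_{00} = nm$, which combined with $M_{00} = 0$ forces $m = 0$ and hence $M = 0$. Invoking Theorem \ref{uniqueness} concludes that $X^*$ is the unique optimal solution of \eqref{theta:primalcn}. The hardest part is the closed-form solution of the recurrence and the parity step that isolates $|i-j|=1$ as the only vanishing case of the bracket; this is precisely the feature which fails for even cycles, where the Lovász theta SDP does not have a unique optimizer of the same form.
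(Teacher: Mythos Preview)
Your proposal is correct. Part (i) matches the paper: both take the Schur complement of $Z_n^*$ with respect to the $(0,0)$ entry (your phrase ``on the lower-right block'' is a minor terminological slip, but the inequality $\vartheta(C_n)(I_n+\alpha A_{C_n})\succeq e_ne_n^\top$ is exactly the paper's condition) and then invoke the spectrum of $A_{C_n}$. For part (ii) you diverge from the paper. The paper argues by direct elimination: from $(MZ_n^*)_{i,i\pm1}=0$ it extracts $A_{i,i\pm2}=(2\cos(\pi/n)-1)A_{ii}$ for the lower-right block $A$ of $M$; symmetry gives $A_{ii}=A_{i+2,i+2}$, and oddness of $n$ makes the step-$2$ orbit cover all indices, so all diagonal entries are equal; the trace identity $(MZ_n^*)_{00}=0$ then kills the diagonal, after which an induction on $k$ using $(MZ_n^*)_{i,i+k}=0$ zeros out the remaining entries. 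Your route instead solves the cyclic three-term recurrence in closed form, uses evenness of the bracket in $i-j$ so that $M_{ij}=M_{ji}$ forces $m_i=m_j$ whenever the bracket does not vanish, and finishes via connectedness of $\overline{C_n}$ (for $n\ge 5$) together with the $(0,0)$ equation. Your argument is more analytic and makes the role of odd $n$ transparent at two separate points (the $n$-periodicity of $e^{\pm i(n-1)\pi/n}$, and the parity step isolating $|i-j|\equiv 1$ as the only vanishing case), while the paper's elimination is shorter and needs no explicit closed form; both are valid and pivot on the same circulant structure.
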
 

\begin{proof} By Theorem \ref{uniqueness}, the SDP \eqref{theta:primalcn} has a unique optimal solution if there exists a dual optimal solution $Z$ for which, the homogeneous linear system 
\be
M_{00}=0, \ M_{0i}=M_{ii}, \ M_{i(i+1)}=0\ (i\in [n]), \ MZ = 0,
\ee
in the symmetric matrix variable $M$, only has the trivial solution $M=0$. 
The dual of \eqref{theta:primalcn} is equal to the least $t\ge 0$ for which 
\be \label{theta:dualcn}
 tE_{00}+\sum_{i=1}^n E_{ii}(\lambda_i-w_i)-\sum_{i=1}^n\lambda_i E_{0i}+\sum_{i=1}^n \mu_{i(i+1)}E_{i(i+1)}\succeq 0.
\ee
We show that for any odd positive integer $n,$ the following matrix is an optimal solution for \eqref{theta:dualcn}
 \be\label{zmatrix}
{ Z}_n = \left[ 
\begin{array} {c | c}
\vartheta(C_n) & -e^\top\\
\hline 
-e & I_n+\frac{n -\vartheta(C_n)}{2 \vartheta(C_n)}A_{C_n}
\end{array}
\right]\in \R^{(1+n)\times (1+n)},
\ee
where $e$ is the all one vector of size $n$ and $A_{C_n}$ is the adjacency matrix of $C_n$. 
By construction, the value of $Z_n$ is equal to $\vartheta(C_n)$. It remains to show feasibility. Note that $Z_n$ is an admissible solution to \eqref{theta:dualcn}, where $t=\vartheta(C_n)$, $\lambda_i=-1$, $\nu_i=0$ and $\mu_i=(n -\vartheta(C_n))/(2 \vartheta(C_n))$. Thus, it remains to prove that $Z_n$ is positive semidefinite, i.e., it has nonnegative eigenvalues. 
Taking the Schur complement of $Z_n$ with respect to its top left entry, we have that 
\be\label{scomplemet}
Z_n\succeq 0 \Longleftrightarrow I_n+\frac{n -\vartheta(C_n)}{2 \vartheta(C_n)}A_{C_n}-{1\over \vartheta(C_n)}ee^\top\succeq 0.
\ee
 As the graph $C_n$ is 2-regular, the all ones vector $e$ is an eigenvector of $A_{C_n}$ with corresponding eigenvalue $2$. Thus, the matrices $A_{C_n}$ and $ee^\top$ commute. Consequently, the eigenvalues of the matrix $I_n+\frac{n -\vartheta(C_n)}{2 \vartheta(C_n)}A_{C_n}-{1\over \vartheta(C_n)}ee^\top$~are~$0$  and 
 \be\label{eigs}
 1+\frac{n -\vartheta(C_n)}{\vartheta(C_n)}\cos {2\pi k\over n},\ 1\le k \le n-1,
 \ee where we used the well-known fact that the eigenvalues of $A_{C_n}$ are given by $\omega^k+\omega^{-k}$, for $k=0,\ldots, n-1$, where $\omega=\exp (2\pi i /n)$ is an $n$-th root of unity. 

To prove that $Z_n$ is positive semidefinite, it remains to show that the smallest eigenvalue from \eqref{eigs} is nonnegative. Note that the smallest eigenvalue is obtained for $k=(n-1)/2$, and is equal to
\be1-\frac{n -\vartheta(C_n)}{\vartheta(C_n)}\cos {\pi \over n}=0,
\ee
where the last equality follows from $\vartheta(C_n)=\frac{ n \cos \pi/ n}{ 1+ \cos\pi / n}$. 

Lastly, we show 
that any odd $n,$ the only symmetric matrix $M \in \mathbb{R}^{(1+n) \times (1+n)} $ satisfying 
\be 
M_{00}=0, \ M_{0i}=M_{ii}, \ M_{i(i+1)}=0\ (i\in [n]), \ MZ_n=0,
\ee
is the matrix $M=0$. 

 For notational convenience, let 
$ MZ_n=\left(\begin{smallmatrix} 
{q}& {r^\top}\\ 
{r} & {T}
\end{smallmatrix}\right)$. 
%where $ Q $ is a $1 \times 1$ table, $ R $ is a $1 \times n$ table, $ S $ is a $n \times 1$ table and $ T $ is a $n \times n$ table of linear equations.
 In the rest of this section we use the notation $\underline{i}$ to denote $i \mod n,$ where $i$ is an integer. Also, we denote by $A$ the $n\times n$ bottom-right submatrix of $M$.  The linear equation corresponding to $q$ implies that
 \beq
\label{trace}
 \Tr \left( M \right)= 0.
\enq 
For $i \in \{ 1,2,\cdots, n \}$, the $2n$ linear equations corresponding to ${ T}_{i,\underline{i+1}}$ and ${ T}_{i,\underline{i-1}}$ imply 
\beq
\label{equalA}
{{ A}_{i,\underline{i+2}}} = {{ A}_{i,\underline{i-2}}} = \alpha {{ A}_{i,i}},
\enq
where $\alpha = \left(2 \cos\left(\frac{\pi}{n}\right)-1\right).$ Since $ A$ is symmetric,
\beq
\label{equalB}
{{ A}_{\underline{i+2},i}} = {{ A}_{i,\underline{i+2}}} =\alpha {{ A}_{\underline{i+2}, \underline{i+2}}}
\enq
and
\beq
%\label{equa}
{{ A}_{\underline{i-2},i}} = {{ A}_{i,\underline{i-2}}} = \alpha {{ A}_{\underline{i-2}, \underline{i-2}}},
\enq
for $i \in \{ 1,2,\cdots, n \}$.
Using \eqref{equalA} and \eqref{equalB}, we get 
\beq
\label{equalC}
{{ A}_{i,i}} = {{ A}_{j,j}},
\enq
for $i, j \in \{ 1,2,\cdots, n \}$. Finally, using \eqref{trace}, \eqref{equalA}, \eqref{equalB}, and \eqref{equalC}, we get
\beq
\label{equalD}
{{ A}_{i,i}} = {{ A}_{i,\underline{i+2}}}= {{ A}_{i,\underline{i-2}}} = 0.
\enq
Now, using ${ T}_{i,\underline{i+k}}$, we infer 
\beq
\label{equalE}
{{ A}_{i,\underline{i+k+1} }} = 0,
\enq
for $k \in \{2,\cdots,n-4\}.$ Thus ${ A} = 0$ and since ${{A}}_{i,i} = {{K}}_{1,i} $, this finally leads to ${M} = 0.$
 
\end{proof} 

 \section{Example: 5-Cycle}\label{five-cycle}
 Now we explain our procedure for proving self testing for the special case of 5 cycle. The optimal solution for the dual \ref{dualcnintro} (see MT \cite{MT}) for $n=5,$ is given by

%Solving the dual for 5-cycle we get the following dual optimal solution: 
%$$t=\sqrt{5}, \quad \mu_i=-2, \quad \lambda_{ij}={5-\sqrt{5}\over 2\sqrt{5}}, i.e., $$
\be\label{dualsolution}
Z_5=\left(\begin{array}{c|ccccc} 
\sqrt{5} & -1 & -1 & -1 & -1 & - 1\\
\hline
-1 & 1 & c& 0 & 0 & c\\
-1 & c & 1 & c& 0 & 0 \\
-1 & 0 & c & 1 & c & 0\\
-1 & 0 & 0 & c & 1 & c\\
-1& c & 0 & 0& c & 1
\end{array}\right), 
\ee
where $c={5-\sqrt{5}\over 2\sqrt{5}}.$ 
The $M$ matrix introduced in Theorem~\ref{uniqueness} for this example is given by
\be\label{x_5}
M_5=\left(\begin{array}{c|ccccc} 
0 & m_1 & m_2 & m_3 & m_4 & m_5\\
\hline
m_1 & m_1  & 0&  m_6& m_9 & 0\\
m_2 & 0 & m_2  & 0& m_7 & m_{10} \\
m_3 & m_6& 0 & m_3  & 0& m_8\\
m_4 & m_9 & m_7 & 0& m_4 & 0\\
m_5 & 0 & m_{10} & m_8& 0 & m_5 
\end{array}\right), 
\ee
To show that $Z_5$ is dual-nondegenerate  optimal solution to  \eqref{theta:primalmain} (see MT \cite{MT}), we need to show that $M_5 = 0 $ is only solution to the system of linear equations $M_5 Z_5 = 0$. Using equations \eqref{equalA}to \eqref{equalE} for $n = 5$, $M_5=0$ turns out to be the only solution to $M_5 Z_5 = 0$. Thus optimal solution to dual is non-degenerate. Using Theorem \ref{uniqueness}, the primal optimal solution for \eqref{theta:primalmain} is unique. Using Main Theorem, it is evident that KCBS inequality corresponding to the $5$ cycle scenario admits robust self-testing.
The optimal solution for primal is given by:
\be\label{primalsolution}
X_5^{\star}=\left(\begin{array}{c|ccccc} 
1 & d & d  & d & d &d \\
\hline
d & d & 0& f & f & 0\\
d & 0 & d& 0& f & f \\
d & f & 0& d & 0 & f\\
d & f & f & 0& d & 0\\
d& 0 & f & f& 0 & d
\end{array}\right), 
\ee
where $d={1 \over \sqrt{5}}$ and $f =\beta + (1- \beta)\cos(0.4 \pi) $ such that $\beta =\frac{ \cos\left(\pi \over 5\right)}{ 1+ \cos\left(\pi \over 5\right)}.$ The Gram decomposition of \eqref{primalsolution} gives the optimum measurement settings and the state. Note that the matrix $X_5^{\star}$ is rank 3, and hence gives a gram decomposition over qutrits. One such canonical decomposition is given by:
$ \ket{u_0} = \left(1,0,0\right)$ and
 $\ket{u_j} = \left( \cos(\theta), \sin(\theta)\sin\left( \phi_j\right), \sin(\theta)\cos\left(\phi_j\right) \right),$
%\ket{v_j} &= \left( \cos(\theta), \sin(\theta)\sin\left(\frac{j \pi (n-1)}{n}\right), \sin(\theta)\cos\left(\frac{j \pi (n-1)}
%{n}\right) \right)^\top,
%\end{aligned}
%\ee
where 
 $\cos^2(\theta)= \frac{ \cos\left(\pi/ n\right)}{ 1+ \cos\left(\pi / n\right)} \text{ and } \phi_j = \frac{j \pi (n-1)}{n} \text{ for } 1\le j\le n.$

\end{document}